\documentclass[11pt]{article}
\usepackage[margin=1in]{geometry}

\usepackage{amssymb,amsmath,amsthm}
\usepackage{paralist}
\usepackage{color}

%%% theorems, propositions, lemmas, definitions, blah blah
\newtheorem{theorem}{Theorem}
\newtheorem{prop}[theorem]{Proposition}
\newtheorem{lemma}[theorem]{Lemma}

%%% new definitions -- reals, epsilons, blah blah.
\def\eps{\epsilon}
\def\R{\mathbb{R}}
\def\suchthat{\;:\;}

%%% new commands -- distances, operators, blah blah.
\newcommand{\abs}[1]{\left|#1\right|}

\newcommand{\norm}[1]{\left\|#1\right\|}
\newcommand{\sqd}[1]{\left\|#1\right\|^{2}}
\newcommand{\prob}[1]{\operatorname{Pr}\left(#1\right)}
\newcommand{\expec}[1]{\operatorname{E}\left[#1\right]}
\newcommand{\size}[1]{\left|#1\right|}
\newcommand{\linspan}[1]{\operatorname{span}\left(#1\right)}
\newcommand{\inner}[2]{\left\langle #1, #2\right\rangle}

%%%% title, authors, affiliations, e-mails.
\title{Subspace approximation with outliers}

\author{Amit Deshpande\footnote{Microsoft Research India. \texttt{amitdesh@microsoft.com}} \and Rameshwar Pratap\footnote{IIT Mandi \texttt{rameshwar.pratap@gmail.com}}}

\begin{document}

\maketitle

\begin{abstract}
The subspace approximation problem with outliers, for given $n$ points in $d$ dimensions $x_{1}, x_{2}, \dotsc, x_{n} \in \R^{d}$, an integer $1 \leq k \leq d$, and an outlier parameter $0 \leq \alpha \leq 1$, is to find a $k$-dimensional linear subspace of $\R^{d}$ that minimizes the sum of squared distances to its nearest $(1-\alpha)n$ points. More generally, the $\ell_{p}$ subspace approximation problem with outliers minimizes the sum of $p$-th powers of distances instead of the sum of squared distances. Even the case of $p=2$ or robust PCA is non-trivial, and previous work requires additional assumptions on the input or generative models for it. 
Any multiplicative approximation algorithm for the subspace approximation problem with outliers must solve the robust subspace recovery problem, a special case in which the $(1-\alpha)n$ inliers in the optimal solution are promised to lie exactly on a $k$-dimensional linear subspace. However, robust subspace recovery is Small Set Expansion (SSE)-hard, and known algorithmic results for robust subspace recovery require strong assumptions on the input, e.g., any $d$ outliers must be linearly independent.

In this paper, we show how to extend dimension reduction techniques and bi-criteria approximations based on sampling and coresets to the problem of subspace approximation with outliers. To get around the SSE-hardness of robust subspace recovery, we assume that the squared distance error of the optimal $k$-dimensional subspace summed over the optimal $(1-\alpha)n$ inliers is at least $\delta$ times its squared-error summed over all $n$ points, for some $0 < \delta \leq 1 - \alpha$. Under this assumption, we give an efficient algorithm to find a \emph{weak coreset} or a subset of $\text{poly}(k/\eps) \log(1/\delta) \log\log(1/\delta)$ points whose span contains a $k$-dimensional subspace that gives a multiplicative $(1+\eps)$-approximation to the optimal solution. 
%The entire span of this subset is a linear subspace of dimension $\text{poly}(k/\eps) \log(1/\delta) \log\log(1/\delta)$ that gives a bi-criteria $(1+\eps)$-approximation to the optimal $k$-dimensional solution. 
The running time of our algorithm is linear in $n$ and $d$. Interestingly, our results hold even when the fraction of outliers $\alpha$ is large, as long as the obvious condition $0 < \delta \leq 1 - \alpha$ is satisfied.
We show similar results for subspace approximation with $\ell_{p}$ error or more general M-estimator loss functions, and also give an additive approximation for the affine subspace approximation problem.
\end{abstract}

\section{Introduction} \label{sec:intro}
Finding low-dimensional representations of large, high-dimensional input data is an important first step for several problems in computational geometry, data mining, machine learning, and statistics. For given input points $x_{1}, x_{2}, \dotsc, x_{n} \in \R^{d}$, a positive integer $1 \leq k \leq d$ (typically much smaller than $d$) and $1 \leq p < \infty$, the $\ell_{p}$ subspace approximation problem asks to find a $k$-dimensional linear subspace $V$ of $\R^{d}$ that essentially minimizes the sum of $p$-th powers of the distances of all the points to the subspace $V$, or to be precise, it minimizes the $\ell_{p}$ error
\[
\left(\sum_{i=1}^{n} d(x_{i}, V)^{p}\right)^{1/p} \quad \text{or equivalently} \quad \sum_{i=1}^{n} d(x_{i}, V)^{p}.
\]
For $p=2$, the optimal subspace is spanned by the top $k$ right singular vectors of the matrix $X \in \R^{n \times d}$ formed by $x_{1}, x_{2}, \dotsc, x_{n}$ as its rows. The optimal solution for $p=2$ can be computed efficiently by the Singular Value Decomposition (SVD) in time $O\left(\min\{nd^{2}, n^{2}d\}\right)$. Liberty's deterministic matrix sketching \cite{Liberty13} and subsequent work \cite{GhashamiP14} provide a faster, deterministic algorithm that runs in $O\left(nd \cdot \text{poly}(k/\eps)\right)$ time and gives a multiplicative $(1+\eps)$-approximation to the optimum. There is also a long line of work on randomized algorithms \cite{Sarlos06,Woodruff-monograph} that sample a subset of points and output a subspace from their span, giving a multiplicative $(1+\eps)$-approximation in running time $O(\text{nnz}(X)) + (n+d) \text{poly}(k/\eps)$, where $\text{nnz}(X)$ is the number of non-zero entries in $X$. These are especially useful on sparse data.

For $p \neq 2$, unlike the $p = 2$ case, we do not know any simple description of the optimal subspace. For any $p \geq 1$, Shyamalkumar and Varadarajan \cite{ShyamalkumarV12} give a $(1+\eps)$-approximation algorithm that runs in time $O\left(nd \cdot \exp((k/\eps)^{O(p)})\right)$. Building upon this, Deshpande and Varadarajan \cite{DeshpandeV07} give a bi-criteria $(1+\eps)$-approximation by finding a subset of $s = (k/\eps)^{O(p)}$ points in time $O\left(nd \cdot \text{poly}(k/\eps)\right)$ such that their $s$-dimensional linear span gives a $(1+\eps)$-approximation to the optimal $k$-dimensional subspace. The subset they find is basically a \emph{weak coreset}, and projecting onto its span also  gives dimension-reduction result for subspace approximation. Feldman et al. \cite{FeldmanMSW10} improve the running time to $nd \cdot \text{poly}(k/\eps) + (n+d) \cdot \exp(\text{poly}(k/\eps))$ for $p=1$. Feldman and Langberg \cite{FeldmanL11} extend this result to achieve a running time of $nd \cdot \text{poly}(k/\eps) + \exp((k/\eps)^{O(p)})$ for any $p \geq 1$. Clarkson and Woodruff \cite{ClarksonW15} improve this running time to $O(\text{nnz}(X) + (n+d) \cdot \text{poly}(k/\eps) + \exp(\text{poly}(k/\eps)))$ for any $p \in [1, 2)$. The case $p \in [1, 2)$, especially $p=1$, is important because the $\ell_{1}$ error (i.e., the sum of distances) is more robust to outliers than the $\ell_{2}$ error (i.e., the sum of squared distances). 

We consider the following variant of $\ell_{p}$ subspace approximation in the presence of outliers. Given points $x_{1}, x_{2}, \dotsc, x_{n} \in \R^{d}$, an integer $1 \leq k \leq d$, $1 \leq p < \infty$, and an outlier parameter $0 \leq \alpha \leq 1$, find a $k$-dimensional linear subspace $V$ that minimizes the sum of $p$-th powers of distances of the $(1-\alpha)n$ points nearest to it. 
In other words, let $N_{\alpha}(V) \subseteq [n]$ consist of the indices of the nearest $(1-\alpha)n$ points to $V$ among $x_{1}, x_{2}, \dotsc, x_{n}$. We want to minimize
$
\sum_{i \in N_{\alpha}(V)} d(x_{i}, V)^{p}.
$

The robust subspace recovery problem is a special case in which the optimal error for the subspace approximation problem with outliers is promised to be zero, that is, the optimal subspace $V$ is promised to go through some $(1-\alpha)n$ points among $x_{1}, x_{2}, \dotsc, x_{n}$. Thus, any multiplicative approximation must also have zero error and recover the optimal subspace. Khachiyan \cite{Khachiyan1995} proved that it is NP-hard to find a $(d-1)$-dimensional subspace that contains at least $(1-\eps)(1 - 1/d)n$ points. Hardt and Moitra \cite{HardtM13} study robust subspace recovery and define an $(\eps, \delta)$-Gap-Inlier problem of distinguishing between these two cases: \begin{inparaenum}[(a)] \item there exists a subspace of dimension $\delta n$ containing $(1-\eps)\delta n$ points and \item every subspace of dimension $\delta n$ contains at most $\eps \delta n$ points. \end{inparaenum} They show a polynomial time reduction from the $(\eps, \delta)$-Gap-Small-Subset-Expansion problem to the $(\eps, \delta)$-Gap-Inlier problem. For more on Small Set Expansion conjecture and its connections to Unique Games, please see \cite{RS2010}. Under a strong assumption on the data (that requires any $d$ or fewer outliers to be linearly independent), Hardt and Moitra give an efficient algorithms for finding $ k$-dimensional subspace containing $(1 - k/d) n$ points. This naturally leaves open the question of finding other more reasonable approximations to the subspace approximation problem with outliers.

In recent independent work, Bhaskara and Kumar (see Theorem 12 in \cite{BK2018}) showed that if $(\eps, \delta)$-Gap-Small-Subset-Expansion problem is NP-hard, then there exists an instance of subspace approximation with outliers where the optimal inliers lie on a $k$-dimensional subspace but it is NP-hard to find even a subspace of dimension $O(k/\sqrt{\eps})$ that contains all but $(1 + \delta/4)$ times more points than the optimal number of outliers. This showed that even bi-criteria approximation for subspace recovery is a challenging problem. We compare and contrast our results with the result of Bhaskara and Kumar \cite{BK2018}. Their algorithm throws more outlier than the optimal solution, while we don't throw any extra outlier. Also, their bi-criteria approximation depends on the ``rank-$k$ condition'' number which is a somewhat stronger assumption than ours. %Bhaskara and Kumar \cite{BK2018} show weaker results than ours under a somewhat stronger assumption on what they call as rank-$k$ condition number of the input data.

The problem of clustering using points and lines in the presence of outliers has been studied in special cases of $k$-median and $k$-means clustering \cite{Chen2008,KrishnaswamyLS2018}, and points and line clustering \cite{FeldmanS12}. Krishnaswamy et al. \cite{KrishnaswamyLS2018} give a constant factor approximation for $k$-median and $k$-means clustering with outliers, whereas Feldman and Schulman give $(1+\eps)$-approximations for $k$-median with outliers and $k$-line median with outliers that run in time linear in $n$ and $d$.

Another recent line of research on robust regression considers data coming from an underlying distribution where a fraction of it is arbitrarily corrupted \cite{LRV2016,KKM2018}. The problem we study is different as we do not assume any generative model for the input.

\section{Our contributions}
\begin{itemize}
\item We assume that the $\ell_{p}$ error of the optimal subspace summed over the optimal $(1-\alpha)n$ inliers is at least $\delta$ times its total $\ell_{p}$ error summed over all $n$ points, for some $\delta > 0$. Under this assumption, we give an algorithm to efficiently find a subset of $\text{poly}(pk/\eps) \cdot \log(1/\delta) \log\log(1/\delta)$ points from $x_{1}, x_{2}, \dotsc, x_{n}$ such that the span of this subset contains a $k$-dimensional linear subspace whose $\ell_{p}$ error over its nearest $(1-\alpha)n$ points is within $(1+\eps)$ of the optimum. The running time of our algorithm is linear in $n$ and $d$. Note that even for $\delta$ as small as $1/\text{poly}(n)$, our algorithm outputs a fairly small subset of $\text{poly}(pk/\eps) \cdot \log n \log\log n$ points. The running time of our sampling-based algorithm is linear in $n$ and $d$.

\item Alternatively, the entire span of the above subset is a linear subspace of dimension $\text{poly}(pk/\eps) \cdot \log(1/\delta) \log\log(1/\delta)$ that gives a bi-criteria multiplicative $(1+\eps)$-approximation to the optimal $k$-dimensional solution to the $\ell_{p}$ subspace approximation problem with outliers. Interestingly, this holds even when the fraction of outliers $\alpha$ is large, as long as the obvious condition $0 < \delta \leq 1 - \alpha$ is satisfied.

\item Our assumption that the $\ell_{p}$ error of the optimal subspace summed over the optimal $(1-\alpha)n$ inliers is at least $\delta$ times its total $\ell_{p}$ error summed over all $n$ points, for some $\delta > 0$, is more reasonable and realistic than the assumptions used in previous work on subspace approximation with outliers. Without this assumption, our problem (even its special case of subspace recovery) is known to be Small Set Expansion (SSE)-hard \cite{HardtM13}.

\item The technical contribution of our work is in showing that the sampling-based weak coreset constructions and dimension reduction results for the subspace approximation problem without outliers \cite{DeshpandeV07} also extend to its robust version for data with outliers. 
If we know the inlier-outlier partition of the data, then   the  result of~\cite{DeshpandeV07} can easily be extended for the outlier version of the problem. However, if we don't know such a partitioning, then a brute-force approach has to go over all $n \choose (1-\alpha)n$ subsets and picks the best solution. This is certainly not an efficient approach as the number of such subsets is exponential in $n$.
Further, on inputs that satisfy our assumption (stated above), it is easy to see that solving the subspace approximation problem without outliers gives a multiplicative $1/\delta$-approximation to the subspace approximation problem with outliers. Our contribution lies in showing that this approximation guarantee can be improved significantly in a small number of additional sampling steps.

\item We show immediate extensions of our results to more general M-estimator loss functions as previously considered by \cite{ClarksonW15}. 

\item We show that our multiplicative approximation for the \emph{linear} subspace approximation problem under $\ell_{2}$ error implies an additive approximation for the \emph{affine} subspace approximation problem under $\ell_{2}$ error. The running time of this algorithm is also linear in $n$ and $d$.
\end{itemize}

\section{Warm-up: least squared error line approximation with outliers}
%\vspace{-2mm}
As a warm-up towards the main proof, we first consider the case $k=1$ and $p=2$, that is, for a given $0 < \alpha < 1$, we want to find the best line that minimizes the sum of squared distances summed over its nearest $(1-\alpha)n$ points. Let $x_{1}, x_{2}, \dotsc, x_{n} \in \R^{d}$ be the given points and let $l^{*}$ be the optimal line. Let $I \subseteq [n]$ consist of the indices of the nearest $(1-\alpha)n$ points to $l^{*}$ among $x_{1}, x_{2}, \dotsc, x_{n}$.

Our algorithm iteratively builds a subset $S \subseteq [n]$ by starting from $S = \emptyset$ and in each step samples with replacement $\text{poly}(k/\eps)$ i.i.d. points where each point $x_{i}$ is picked with probability proportional to its squared distance to the span of the current subset $d(x_{i}, \linspan{S})^{2}$. We abuse the notation as $\linspan{S}$ to denote the linear subspace spanned by $\{x_{i} \suchthat i \in S\}$. These sampled points are added to $S$ and the sampling algorithm is repeated $\text{poly}(k/\eps)$ times. 

\subsection{Additive approximation}
We are looking for a small subset $S \subseteq [n]$ of size $\text{poly}(1/\eps)$ that contains a close additive approximation to the optimal subspace over the optimal inliers, that is, for the projection of $l^{*}$ onto $\linspan{S}$ denoted by $P_{S}(l^{*})$,
\begin{align}
\sum_{i \in I} d(x_{i}, P_{S}(l^{*}))^{2} \leq \sum_{i \in I} d(x_{i}, l^{*})^{2} + \eps~ \sum_{i=1}^{n} \sqd{x_{i}} \label{eq:add-approx}
\end{align}
This immediately implies that there exists a line $l_{S}$ in $\linspan{S}$ such that 
\[
\sum_{i \in N_{\alpha}(l_{S})} d(x_{i}, l_{S})^{2} \leq \sum_{i \in I} d(x_{i}, l^{*})^{2} + \eps~ \sum_{i=1}^{n} \sqd{x_{i}},
\]
where $N_{\alpha}(l_{S}) \subseteq [n]$ consists of the indices of the nearest $(1-\alpha)n$ points from $x_{1}, x_{2}, \dotsc, x_{n}$ to $l_{S}$.

Given any subset $S \subseteq [n]$, define the set of \emph{bad} points as a subset of inliers $I$ whose error w.r.t. $P_{S}(l^{*})$ is somewhat larger than their error w.r.t. the optimal line $l^{*}$, that is, $B(S) = \{i \in I \suchthat d(x_{i}, P_{S}(l^{*}))^{2} > (1+\eps/2)~ d(x_{i}, l^{*})^{2}\}$ and \emph{good} points as $G(S) = I \setminus B(S)$. The following lemma shows that sampling points with probability proportional to their squared lengths $\sqd{x_{i}}$ picks a bad point from $B(S)$ with probability at least $\eps/2$. %Its  proof is deferred to the Appendix. 

\begin{lemma} \label{lemma:bad-line}
If $S \subseteq [n]$ does not satisfy \eqref{eq:add-approx}, then 
$
\sum_{i \in B(S)} \sqd{x_{i}} \geq \frac{\eps}{2}~ \sum_{i=1}^{n} \sqd{x_{i}}.
$
\end{lemma}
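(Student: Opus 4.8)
The plan is to argue directly from the failure of \eqref{eq:add-approx}. Negating that inequality gives
\[
\sum_{i \in I} \left( d(x_{i}, P_{S}(l^{*}))^{2} - d(x_{i}, l^{*})^{2} \right) > \eps \sum_{i=1}^{n} \sqd{x_{i}},
\]
so the whole proof reduces to splitting the left-hand sum over the partition $I = G(S) \cup B(S)$ and bounding each of the two pieces in terms of $\sum_{i=1}^{n} \sqd{x_{i}}$.

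For the good points I would invoke the definition $G(S) = I \setminus B(S)$ termwise: each $i \in G(S)$ satisfies $d(x_{i}, P_{S}(l^{*}))^{2} \leq (1 + \eps/2)\, d(x_{i}, l^{*})^{2}$, hence $d(x_{i}, P_{S}(l^{*}))^{2} - d(x_{i}, l^{*})^{2} \leq (\eps/2)\, d(x_{i}, l^{*})^{2}$. Summing and then using the elementary bound $d(x_{i}, l^{*})^{2} \leq \sqd{x_{i}}$ (valid because $l^{*}$ passes through the origin, so the origin is a point of $l^{*}$ at distance $\norm{x_{i}}$) gives $\sum_{i \in G(S)} (\cdots) \leq (\eps/2) \sum_{i=1}^{n} \sqd{x_{i}}$.

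For the bad points I would simply discard the subtracted term and apply the same origin-distance bound to the projected line: $d(x_{i}, P_{S}(l^{*}))^{2} - d(x_{i}, l^{*})^{2} \leq d(x_{i}, P_{S}(l^{*}))^{2} \leq \sqd{x_{i}}$, since $P_{S}(l^{*})$, being the projection of $l^{*}$ onto $\linspan{S}$, is again a line through the origin. This yields $\sum_{i \in B(S)} (\cdots) \leq \sum_{i \in B(S)} \sqd{x_{i}}$. Plugging both pieces into the negated inequality and rearranging isolates $\sum_{i \in B(S)} \sqd{x_{i}} > (\eps/2) \sum_{i=1}^{n} \sqd{x_{i}}$, which is the desired conclusion.

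The step that actually carries the argument — and the only place demanding care — is recognizing that both $l^{*}$ and its projection $P_{S}(l^{*})$ are \emph{linear} subspaces, so the distance from any $x_{i}$ to either is at most $\norm{x_{i}}$. This is precisely what makes $\sum_{i} \sqd{x_{i}}$ the correct normalizer for the additive error and prevents the bad-point bound from collapsing into something vacuous. I do not expect a genuine obstacle here: this lemma is the clean $k=1$, $p=2$ base case whose structure the full argument will later lift to general $k$ and $p$.
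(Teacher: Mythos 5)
Your proof is correct and is essentially the paper's own argument run in the contrapositive direction: the paper assumes the bad-point mass is small and derives \eqref{eq:add-approx} via the same split of $I$ into $G(S)$ and $B(S)$, the same $(1+\eps/2)$ bound on good points, and the same origin-distance bounds $d(x_{i}, l^{*}) \leq \norm{x_{i}}$ and $d(x_{i}, P_{S}(l^{*})) \leq \norm{x_{i}}$. Your direct rearrangement after negating \eqref{eq:add-approx} is exactly the contradiction the paper reaches, so the two proofs coincide in content.
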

%%%%% BELOW PROOF IS COMMENTED %%%%%

\begin{proof}
Suppose $\sum_{i \in B(S)} \sqd{x_{i}} < \eps/2~ \sum_{i=1}^{n} \sqd{x_{i}}$. Then we get a contradiction to the assumption that $S$ does not satisfy \eqref{eq:add-approx} as follows.
\begin{align*}
&\sum_{i \in I} d(x_{i}, P_{S}(l^{*}))^{2}  = \sum_{i \in G(S)} d(x_{i}, P_{S}(l^{*}))^{2} + \sum_{i \in B(S)} d(x_{i}, P_{S}(l^{*}))^{2} \\
& \leq \left(1 + \frac{\eps}{2}\right)~ \sum_{i \in G(S)} d(x_{i}, l^{*})^{2} + \sum_{i \in B(S)} \sqd{x_{i}} \\
 &\leq \left(1 + \frac{\eps}{2}\right)~ \sum_{i \in I} d(x_{i}, l^{*})^{2} + \frac{\eps}{2}~ \sum_{i=1}^{n} \sqd{x_{i}} \\
& \leq \sum_{i \in I} d(x_{i}, l^{*})^{2} + \eps~ \sum_{i=1}^{n} \sqd{x_{i}}.
\end{align*}
\end{proof}

%%%%% ABOVE PROOF IS COMMENTED %%%%%
Below we show that a \emph{bad} point sampled by squared-length sampling can be used to get another line closer to the optimal solution by a multiplicative factor, and repeat this. %A proof of the following theorem is deferred to the Appendix. 

\begin{theorem} \label{thm:additive-line}
For any given $x_{1}, x_{2}, \dotsc, x_{n} \in \R^{d}$, let $S$ be an i.i.d. sample of $O\left((1/\eps^{2})~ \log(1/\eps)\right)$ points picked by squared-length sampling. Let $I \subseteq [n]$ be the set of optimal $(1-\alpha)n$ inliers and $l^{*}$ be the optimal line that minimizes their squared distance. Then
\[
\sum_{i \in I} d(x_{i}, P_{S}(l^{*}))^{2} \leq \sum_{i \in I} d(x_{i}, l^{*})^{2} + \eps~ \sum_{i=1}^{n} \sqd{x_{i}}, \quad \text{with a constant probability}.
\]
\end{theorem}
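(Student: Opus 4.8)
The plan is to view the i.i.d.\ sample as a sequence $S_1 \subseteq S_2 \subseteq \dotsb \subseteq S_m$ of nested prefixes with $m = O((1/\eps^2)\log(1/\eps))$, and to track a potential measuring how well $\linspan{S_t}$ captures the optimal direction. Write $l^* = \linspan{v^*}$ for a unit vector $v^*$, and recall from Lemma~\ref{lemma:bad-line} that whenever the current prefix $S_t$ violates \eqref{eq:add-approx}, the bad set $B(S_t)$ carries at least an $\eps/2$ fraction of the total squared length $\sum_i \sqd{x_i}$. Hence, conditioned on the history, the next squared-length sample $x_{t+1}$ lands in $B(S_t)$ with probability at least $\eps/2$. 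The two things left to establish are (i) that adding a single bad point makes definite progress, and (ii) that $O((1/\eps)\log(1/\eps))$ units of such progress already force \eqref{eq:add-approx}; combined with a Chernoff bound these give the claim.

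For the potential I would use $\phi(S) = \norm{v^* - P'_S(v^*)}^2$, the squared sine of the angle between $l^*$ and $\linspan{S}$, where $P'_S$ denotes orthogonal projection of a vector onto $\linspan{S}$. This quantity is \emph{monotonically non-increasing} in $S$ (enlarging a subspace only moves its projection of $v^*$ closer to $v^*$), starts at $\phi(\emptyset) = 1$, and is the natural surrogate for \eqref{eq:add-approx}: writing the excess error as $\Phi(S) = \sum_{i \in I}\bigl(d(x_i, P_S(l^*))^2 - d(x_i, l^*)^2\bigr)$, one has $\Phi(S) \ge 0$ (since $v^*$ is optimal for the inliers) and \eqref{eq:add-approx} is exactly $\Phi(S) \le \eps \sum_i \sqd{x_i}$. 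A short computation, decomposing $v^* = \beta u_S + n^*$ with $u_S$ the unit vector of $P_S(l^*)$ and $n^*$ the component of $v^*$ orthogonal to $\linspan{S}$ (so $\norm{n^*}^2 = \phi(S)$), gives $\Phi(S) = O\!\left(\sqrt{\phi(S)}\right) \sum_i \sqd{x_i}$. Thus it suffices to drive $\phi(S)$ below $\Omega(\eps^2)$.

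The crux is the progress lemma: if $x_j \in B(S)$ then adding $x_j$ multiplies $\phi$ by a factor $(1 - \Omega(\eps))$. The mechanism is that a bad point must be significantly aligned with the \emph{uncaptured} direction $n^*$. Indeed, the decrease $\phi(S) - \phi(S \cup \{j\})$ equals $\inner{v^*}{\widehat w_j}^2 = \phi(S)\cos^2\theta$, where $w_j$ is the component of $x_j$ orthogonal to $\linspan{S}$, $\widehat w_j = w_j / \norm{w_j}$, and $\theta$ is the angle between $w_j$ and $n^*$. Translating the defining inequality of $B(S)$, namely $\inner{x_j}{v^*}^2 - \inner{x_j}{u_S}^2 > \tfrac{\eps}{2}\, d(x_j, l^*)^2$, into these coordinates and optimizing over the admissible positions of $x_j$ shows that bad-ness forces $\cos^2\theta = \Omega(\eps)$, hence a multiplicative drop of $(1 - \Omega(\eps))$ in $\phi$. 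I expect this alignment estimate to be the main obstacle: the defining inequality controls only $\inner{x_j}{v^*}^2$ relative to $\inner{x_j}{u_S}^2$ and to $d(x_j, l^*)^2$, and one has to rule out the scenario where a point looks bad because of a large in-subspace component $\inner{x_j}{u_S}$ rather than a genuine residual along $n^*$; carefully eliminating the cross term $\beta\,\inner{x_j}{u_S}\,\inner{x_j}{n^*}$ is where the real work lies.

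Finally I would assemble the pieces. Since $\phi$ is monotone and each bad point shrinks it by a factor $(1 - \Omega(\eps))$ while \eqref{eq:add-approx} is still unmet, at most $T = O((1/\eps)\log(1/\eps))$ bad points suffice to bring $\phi$ from $1$ below $\Omega(\eps^2)$, at which point \eqref{eq:add-approx} holds for the current prefix and, by monotonicity of $\phi$, for the final sample $S = S_m$ as well. It then only remains to argue that $m = O((1/\eps^2)\log(1/\eps))$ squared-length samples contain at least $T$ bad points with constant probability. Because each sample is bad with probability at least $\eps/2$ conditioned on the past (as long as \eqref{eq:add-approx} is violated), the number of bad points stochastically dominates a $\mathrm{Binomial}(m, \eps/2)$ variable with mean $\tfrac{\eps}{2}m \ge 2T$, so a standard Chernoff bound makes the probability of fewer than $T$ bad points a small constant. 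This yields \eqref{eq:add-approx} for $S$ with constant probability, as claimed.
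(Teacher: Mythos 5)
Your proposal is correct and, at the level of overall architecture, it is the paper's proof: Lemma~\ref{lemma:bad-line} supplies the $\eps/2$ lower bound on the probability of hitting a bad point, each bad point yields a multiplicative drop in the (squared) sine of the angle between $l^{*}$ and the current span, $O((1/\eps)\log(1/\eps))$ drops force \eqref{eq:add-approx}, and a Chernoff bound over $O((1/\eps^{2})\log(1/\eps))$ samples finishes. The genuine difference is the middle step: the paper invokes the Angle-Drop Lemma of Shyamalkumar--Varadarajan (Lemma 13, Appendix A of Deshpande--Varadarajan) as a black box, whereas you re-derive the progress estimate for the potential $\phi(S)$, and your sketch does close. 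Writing $a=\inner{x_{j}}{u_{S}}$, $b=\inner{w_{j}}{\hat n}$ (so $b^{2}=\norm{w_{j}}^{2}\cos^{2}\theta$), and $\gamma$ for the angle between $v^{*}$ and $\linspan{S}$, badness of $x_{j}$ rearranges (after absorbing $\norm{y_{j}}^{2}\ge a^{2}$ for the in-span part $y_{j}$) to $\inner{x_{j}}{v^{*}}^{2}-a^{2}>\tfrac{\eps}{2+\eps}\norm{w_{j}}^{2}$, while
\[
\inner{x_{j}}{v^{*}}^{2}-a^{2}\;=\;-(a\sin\gamma)^{2}+2(a\sin\gamma)(b\cos\gamma)+b^{2}\sin^{2}\gamma\;\le\;b^{2}
\]
by completing the square in the scalar $a\sin\gamma$ --- this is precisely the step that eliminates the cross term you flagged as the main obstacle. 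Hence $\cos^{2}\theta>\eps/(2+\eps)\ge\eps/3$ and $\phi(S\cup\{j\})=\phi(S)(1-\cos^{2}\theta)\le(1-\eps/3)\,\phi(S)$. What your route buys is a self-contained, exact progress identity in place of a citation; what the paper's route buys is brevity and a lemma that already generalizes to $p\neq2$ and to the $k$-dimensional rotation argument of Section~\ref{sec:subspace}, which the paper reuses later.

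One caveat, which applies equally to the paper's own write-up but which your more formal phrasing makes visible: the claim that the number of bad points stochastically dominates $\mathrm{Binomial}(m,\eps/2)$ is not literally justified, because Lemma~\ref{lemma:bad-line} lower-bounds the bad-point probability only at steps whose prefix \emph{violates} \eqref{eq:add-approx}, and \eqref{eq:add-approx} --- unlike $\phi$ --- is not monotone in $S$: a prefix can satisfy it while $\phi$ is still large, guaranteeing no progress at that step, and a later prefix can violate it again. The standard repair (from Shyamalkumar--Varadarajan) is to change the bookkeeping: maintain an explicit candidate line $l_{t}\in\linspan{S_{t}}$ that is \emph{frozen} the moment it satisfies the additive guarantee; then ``success at step $t$'' (progress in angle, or already frozen) has conditional probability at least $\eps/2$ in every state, goodness is absorbing by construction, and the domination and Chernoff steps go through verbatim. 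With that adjustment your argument is complete.
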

%%%% BELOW PROOF IS COMMENTED %%%%%%
 
\begin{proof}
Lemma \ref{lemma:bad-line} shows that by squared-length sampling, the probability of picking $i \in B(S)$ is at least $\eps/2$. Now for any $i \in B(S)$, by definition we have $d(x_{i}, P_{S}(l^{*}))^{2} > (1+\eps/2)~ d(x_{i}, l^{*})^{2}$. Using this, we can show that $\linspan{S \cup \{i\}}$ has a line closer to $l^{*}$ by a multiplicative factor. That is, let $\theta_{\text{old}}$ be the angle between $l^{*}$ and $P_{S}(l^{*})$ and let $\theta_{\text{new}}$ be the angle between $l^{*}$ and $P_{S \cup \{i\}}(l^{*})$. Then $\abs{\sin{\theta_{\text{new}}}} \leq (1 - \eps/4)~ \abs{\sin \theta_{\text{old}}}$. This follows from the Angle-drop Lemma in \cite{ShyamalkumarV12} (see Lemma 13, Appendix A of \cite{DeshpandeV07}). Thus, with probability at least $\eps/2$ we pick a \emph{bad} point $x_{i}$ with $i \in B(S)$, and reduce the sine of the angle with $l^{*}$ by a multiplicative factor $(1 - \eps/4)$. We need this to happen $O((1/\eps) \log(1/\eps))$ times to bring $\abs{\sin\theta}$ down to $\eps$, and that gives an approximation with an additive error at most $\epsilon\sum_{i \in I} \norm{x_{i}}^{2} \leq \epsilon \sum_{i=1}^{n} \norm{x_{i}}^{2}$. The probability of picking a \emph{bad} point is at least $\eps/2$, so Chernoff-Hoeffding bound gives that an i.i.d. sample of size $O\left((1/\eps^{2})~ \log(1/\eps)\right)$ picked by squared-length sampling will help reduce the sine of the angle to $l^{*}$ to less than $\eps$, with a constant probability. This gives us a subset $S$ of size $\size{S} = O\left((1/\eps^{2})~ \log(1/\eps)\right)$ such that
$
\sum_{i \in I} d(x_{i}, P_{S}(l^{*}))^{2} \leq \sum_{i \in I} d(x_{i}, l^{*})^{2} + \eps~ \sum_{i=1}^{n} \sqd{x_{i}},
$ with a constant probability.
\end{proof}
 
 \subsection{Multiplicative approximation}
To turn this into a multiplicative $(1+\eps)$ guarantee, we need to use this adaptively by treating the projections of $x_{1}, x_{2}, \dotsc, x_{n}$ orthogonal to $\linspan{S}$ as our new points, and repeating the squared length sampling on these new points. Here is the modified statement of the additive approximation that we need.

\begin{theorem} \label{thm:adaptive-line}
For any given points $x_{1}, x_{2}, \dotsc, x_{n} \in \R^{d}$ and any initial subset $S_{0}$, let $S$ be an i.i.d. sample of $O\left((1/\eps^{2})~ \log(1/\eps)\right)$ points sampled with probability proportional to $d(x_{i}, \linspan{S_{0}})^{2}$. Let $I \subseteq [n]$ be the set of optimal $(1-\alpha)n$ inliers and $l^{*}$ be the optimal line that minimizes their squared distance. Then, with a constant probability, we have
\[
\sum_{i \in I} d(x_{i}, P_{S \cup S_{0}}(l^{*}))^{2} \leq \sum_{i \in I} d(x_{i}, l^{*})^{2} + \eps~ \sum_{i=1}^{n} d(x_{i}, \linspan{S_{0}})^{2}.
\]
\end{theorem}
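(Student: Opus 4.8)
The plan is to replay the argument behind Theorem~\ref{thm:additive-line} almost verbatim, but with the origin replaced by the subspace $\linspan{S_0}$ throughout: I would run the squared-length sampling and angle-drop analysis in the residual picture obtained by projecting orthogonally to $W := \linspan{S_0}$. Let $\pi$ denote the orthogonal projection onto $W^{\perp}$ and set $y_i := \pi(x_i)$, so that $\norm{y_i} = d(x_i, \linspan{S_0})$. The first thing to record is that sampling a point with probability proportional to $d(x_i, \linspan{S_0})^2$ is exactly squared-length sampling applied to the residual points $y_1, \dotsc, y_n$, and that the additive error $\eps \sum_i \sqd{x_i}$ of Theorem~\ref{thm:additive-line} correspondingly becomes $\eps \sum_i d(x_i,\linspan{S_0})^2 = \eps \sum_i \norm{y_i}^2$. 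So the statement to prove is the natural residual analogue of Theorem~\ref{thm:additive-line}, and the proof should follow the same two-step template: a bad-points mass lemma (the analogue of Lemma~\ref{lemma:bad-line}) followed by the Angle-drop Lemma of \cite{ShyamalkumarV12,DeshpandeV07} and a Chernoff--Hoeffding bound over the $O((1/\eps^2)\log(1/\eps))$ samples.

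For the geometric input I would use three facts about $U := \linspan{S \cup S_0} \supseteq W$. First, a Pythagorean split: since $P_{S\cup S_0}(l^*)$ lies in $U$ and $x_i - P_U(x_i) \perp U$,
\[
d(x_i, P_{S\cup S_0}(l^*))^2 = d(x_i, U)^2 + d(P_U(x_i), P_U(l^*))^2 .
\]
Second, $d(x_i, U)^2 \le d(x_i, W)^2 = \norm{y_i}^2$ because $W \subseteq U$. Third, orthogonal projection onto $U$ is a contraction, so $d(P_U(x_i), P_U(l^*)) \le d(x_i, l^*)$. Combining the three gives the clean estimate
\[
d(x_i, P_{S\cup S_0}(l^*))^2 \le d(x_i, \linspan{S_0})^2 + d(x_i, l^*)^2 \qquad \text{for every } i \in I,
\]
which is the substitute for the trivial bound $d(x_i, P_S(l^*))^2 \le \sqd{x_i}$ used in the proof of Lemma~\ref{lemma:bad-line}.

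With this in hand I would define the bad set $B(S) = \{\, i \in I \suchthat d(x_i, P_{S\cup S_0}(l^*))^2 > (1+\eps/2)\, d(x_i, l^*)^2 \,\}$ and aim to establish the mass lemma: if $S \cup S_0$ fails the claimed additive bound, then the residual mass of the bad points is at least a constant fraction of the total, so that residual sampling hits $B(S)$ with probability $\Omega(\eps)$. Granting this, each sampled bad point $i$ satisfies a multiplicative gap between $d(x_i, P_{S\cup S_0}(l^*))$ and $d(x_i, l^*)$, so the Angle-drop Lemma applied to the residual direction of $l^*$ shrinks $\abs{\sin\theta}$ (the sine of the angle between $l^*$ and $P_{S\cup S_0}(l^*)$) by a factor $(1-\eps/4)$; driving $\abs{\sin\theta}$ below $\eps$ after $O((1/\eps)\log(1/\eps))$ successful drops, together with the Chernoff--Hoeffding argument over $O((1/\eps^2)\log(1/\eps))$ samples, yields the additive guarantee with constant probability exactly as in Theorem~\ref{thm:additive-line}.

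I expect the mass lemma to be the main obstacle, precisely because of the extra $d(x_i, l^*)^2$ in the estimate above, which is absent when $S_0 = \emptyset$. In the base case the leftover $\sum_{i \in B(S)} d(x_i, l^*)^2$ is harmless since $d(x_i, l^*)^2 \le \sqd{x_i}$; here the analogous step would require $\sum_{i\in I} d(x_i, l^*)^2 \le \sum_i d(x_i,\linspan{S_0})^2$, which need not hold for an arbitrary $S_0$ once $\linspan{S_0}$ already captures most of the mass. The fix I would pursue is to charge the good points with an additive slack $\tfrac{\eps}{2}\, d(x_i,\linspan{S_0})^2$ rather than a purely multiplicative one, so that the bad-point residual mass can be absorbed into $\eps \sum_i d(x_i,\linspan{S_0})^2$ without ever comparing $\sum_I d(x_i,l^*)^2$ to the residual mass. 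The delicate point is then reconciling this hybrid threshold with the Angle-drop Lemma, which wants a genuine multiplicative gap: points with $d(x_i,\linspan{S_0})^2 \ll d(x_i,l^*)^2$ give only a weak angle drop but also carry negligible sampling weight, so I would either balance these two effects carefully or, more robustly, replace the sine potential by the objective $\sum_{i\in I} d(x_i, P_{S\cup S_0}(l^*))^2$ itself and show each bad-point sample decreases it toward $\sum_{i\in I} d(x_i,l^*)^2$. Everything else is a faithful transcription of the proof of Theorem~\ref{thm:additive-line} with $\linspan{S_0}$ playing the role of the origin.
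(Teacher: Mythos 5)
You set up the right residual picture ($y_i := \pi(x_i)$, sampling proportional to $\norm{y_i}^2$), and your Pythagoras-plus-contraction bound $d(x_i, P_{S\cup S_0}(l^*))^2 \le d(x_i,\linspan{S_0})^2 + d(x_i,l^*)^2$ is correct. But the proof is not completed, and the source of the trouble is that you kept the \emph{original} line $l^*$ as the comparison line: your bad set compares $d(x_i, P_{S\cup S_0}(l^*))$ against $d(x_i,l^*)$, so your mass lemma needs the absorption inequality $\sum_{i\in I}d(x_i,l^*)^2 \le \sum_{i=1}^n d(x_i,\linspan{S_0})^2$, which, as you yourself note, can fail. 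The repairs you then sketch are left as speculation: the hybrid threshold $d(x_i,P_{S\cup S_0}(l^*))^2 > d(x_i,l^*)^2 + \tfrac{\eps}{2}d(x_i,\linspan{S_0})^2$ does in fact rescue the mass lemma (good points contribute their slack, bad points are handled by your per-point bound), but it destroys the multiplicative gap that the angle-drop lemma requires, and you do not resolve that tension. So the proposal stops exactly at the known obstacle rather than getting past it.

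The missing idea --- and what the paper's one-line proof means by ``apply Lemma \ref{lemma:bad-line} to the projections'' --- is to replace the \emph{comparison line} as well: run the whole Theorem \ref{thm:additive-line} argument on the residual points $y_i$ with target line $m := \pi(l^*)$, the projection of $l^*$ orthogonal to $\linspan{S_0}$, rather than $l^*$ itself. Bad points are those with $d(y_i, P'(m))^2 > (1+\eps/2)\, d(y_i, m)^2$, where $P'(m)$ is the projection of $m$ onto the span of the sampled residuals. Lemma \ref{lemma:bad-line} then goes through verbatim, because the inequality it needs for absorption is $d(y_i,m)\le\norm{y_i}$, which holds automatically since $m$ passes through the origin of the residual space; this is precisely the residual analogue of the inequality that blocks your formulation. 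The angle-drop lemma is likewise invoked on residual points with a genuine multiplicative gap in residual distances, so the weighting-versus-gap tension never arises. At the end one converts back: $d(y_i,m)\le d(x_i,l^*)$ by contraction handles the right-hand side, and your own Pythagorean estimate together with $\linspan{S\cup S_0} = \linspan{S_0}\oplus\linspan{\{y_j \suchthat j\in S\}}$ handles the left-hand side. (Strictly speaking this back-translation yields $(1+O(\eps))\sum_{i\in I}d(x_i,l^*)^2$ rather than coefficient $1$; for $p=2$ the factor can be removed using that $l^*$ spans the top eigenvector of $\sum_{i\in I}x_ix_i^{\top}$, or one can simply carry it into Theorem \ref{thm:adaptive-line-T}, whose induction tolerates it --- a point the paper itself glosses over.)
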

\begin{proof}
Similar to the proof of Theorem \ref{thm:additive-line} but using the projections of $x_{i}$'s orthogonal to $\linspan{S_{0}}$ as the point set. In particular, we apply Lemma \ref{lemma:bad-line} to the projections of $x_{i}$'s orthogonal $\linspan{S_{0}}$ instead of $x_{i}$'s. Note that Theorem \ref{thm:additive-line} is a special case with $S_{0} = \emptyset$.
\end{proof}
Repeating this squared-distance sampling adaptively for multiple rounds brings the additive approximation error down exponentially in the number of rounds. %We defer a proof of the following theorem to the Appendix. 
\begin{theorem} \label{thm:adaptive-line-T}
For any given points $x_{1}, x_{2}, \dotsc, x_{n} \in \R^{d}$, any initial subset $S_{0}$ and positive integer $T$, let $S_{t}$ be an i.i.d. sample of $O\left((1/\eps^{2})~ \log(1/\eps)~ \log T\right)$ points sampled with probability proportional to $d(x_{i}, \linspan{S_{t-1}})^{2}$, for $1 \leq t \leq T$. Let $I \subseteq [n]$ be the set of optimal $(1-\alpha)n$ inliers and $l^{*}$ be the optimal line that minimizes their squared distance. Then, with a constant probability,
\[
\sum_{i \in I} d(x_{i}, P_{S_{0} \cup S_{1} \cup \dotsc \cup S_{T}}(l^{*}))^{2} \leq (1 + \eps) \sum_{i \in I} d(x_{i}, l^{*})^{2} + \eps^{T}~ \sum_{i=1}^{n} d(x_{i}, \linspan{S_{0}})^{2}. 
\]
\end{theorem}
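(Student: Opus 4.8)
The plan is to prove the statement by induction on the number of rounds $T$, using Theorem \ref{thm:adaptive-line} as the single-round engine and collapsing the resulting additive errors into a geometric series. Write $U_t = S_0 \cup S_1 \cup \dotsb \cup S_t$ for the sample accumulated after $t$ rounds, and track the potential $\Phi_t := \sum_{i \in I} d(x_i, P_{U_t}(l^*))^2$, the squared error over the optimal inliers of the projection of the optimal line onto the current span. The target is then $\Phi_T \le (1+\eps)\sum_{i\in I} d(x_i,l^*)^2 + \eps^T \sum_{i=1}^n d(x_i,\linspan{S_0})^2$, and the point of the induction is that each extra round should shrink the additive scale by a factor of $\eps$.

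The single-round input is immediate: applying Theorem \ref{thm:adaptive-line} at round $t$ with initial subspace $U_{t-1}$ and fresh sample $S_t$ gives, with constant probability, $\Phi_t \le \sum_{i\in I} d(x_i,l^*)^2 + \eps \sum_{i=1}^n d(x_i, \linspan{U_{t-1}})^2$. The heart of the argument is to feed the additive scale back into the potential so that the recursion becomes self-referential. The mechanism I would use is that the projected line $P_{U_{t-1}}(l^*)$ lies inside $\linspan{U_{t-1}}$, so distances to the span are at most distances to that line, and hence the inlier residual to the span is controlled by the previous potential, $\sum_{i\in I} d(x_i,\linspan{U_{t-1}})^2 \le \Phi_{t-1}$. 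If the residual appearing in the additive term can be taken to be its inlier part, the recursion collapses to $\Phi_t \le \sum_{i\in I} d(x_i,l^*)^2 + \eps\,\Phi_{t-1}$, and unrolling $T$ times together with the base estimate $\sum_{i\in I} d(x_i,\linspan{S_0})^2 \le \sum_{i=1}^n d(x_i,\linspan{S_0})^2$ yields $\Phi_T \le (1+\eps+\dotsb+\eps^{T-1})\sum_{i\in I} d(x_i,l^*)^2 + \eps^T\sum_{i=1}^n d(x_i,\linspan{S_0})^2$. Summing the geometric series bounds the leading coefficient by $1/(1-\eps) \le 1 + 2\eps$ for $\eps \le 1/2$, and rescaling $\eps$ by a constant gives the claimed $(1+\eps)$ factor.

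The main obstacle is precisely this feedback step, and it is where the outliers bite. The squared-distance sampling draws from the whole point set, so the additive error produced by Theorem \ref{thm:adaptive-line} is naturally scaled by the residual of all $n$ points, $\sum_{i=1}^n d(x_i,\linspan{U_{t-1}})^2$; and the outlier part of this total residual need not shrink between rounds (for instance when the points are nearly orthonormal, no $\text{poly}(1/\eps)$-size sample contracts the total residual at all). One therefore cannot simply replace the total residual by the inlier potential $\Phi_{t-1}$, and the telescoping does not close on the quantity in the statement. I would resolve this by reopening the proof of Lemma \ref{lemma:bad-line} and Theorem \ref{thm:adaptive-line} to isolate the sharper fact that the angle-drop excess is summed only over $I$, so the additive error is really governed by the \emph{inlier} residual $\eps\sum_{i\in I} d(x_i,\linspan{U_{t-1}})^2$, with the total residual surviving in the statement only as a convenient upper bound at the base case. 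The delicate part of this refinement is the sampling mass: a bad inlier must still be drawn with probability bounded below, which requires the inlier squared-distance mass to be a nonnegligible fraction of the total mass; in the general setting this is exactly the role played by the paper's $\delta$-type assumption, and I expect it to be the genuinely technical point to nail down here.

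Finally, the probability bookkeeping is routine once the deterministic recursion is in place. Each round succeeds only with constant probability, and all $T$ rounds must succeed together, so I would amplify the per-round success probability to $1 - 1/(cT)$ by enlarging each sample by an $O(\log T)$ factor (a Chernoff--Hoeffding boost of the bad-point hit probability from Lemma \ref{lemma:bad-line}), which accounts for the extra $\log T$ in the sample size $O((1/\eps^2)\log(1/\eps)\log T)$, and then take a union bound over the $T$ rounds to retain overall constant success probability. With that, the geometric telescoping from the refined single-round guarantee delivers the stated bound.
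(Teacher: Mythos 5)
Your strategy is the paper's strategy: induction over rounds with Theorem \ref{thm:adaptive-line} as the per-round engine, the feedback inequality $\sum_{i} d(x_{i},\linspan{S_{0}\cup\dotsb\cup S_{t-1}})^{2} \leq \sum_{i} d(x_{i}, P_{S_{0}\cup\dotsb\cup S_{t-1}}(l^{*}))^{2}$ (valid because the projected line lies in the span), a geometric series summed to $1/(1-\eps)\leq 1+2\eps$, and $O(\log T)$-fold amplification of each round followed by a union bound. The only divergence is that the paper carries the telescoping through to the end, whereas you stop and flag the feedback step as an unresolved obstacle.

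You are right to stop: the gap you name is genuine, and it sits inside the paper's own proof. Theorem \ref{thm:adaptive-line} controls only the inlier quantity $\sum_{i\in I} d(x_{i}, P_{U_{t}}(l^{*}))^{2}$ (writing $U_{t}=S_{0}\cup\dotsb\cup S_{t}$), yet the additive term it produces is $\eps\sum_{i=1}^{n} d(x_{i},\linspan{U_{t-1}})^{2}$ over \emph{all} points. After bounding this by $\eps\sum_{i=1}^{n} d(x_{i}, P_{U_{t-1}}(l^{*}))^{2}$, the paper's displayed chain recurses on that all-points sum as though the induction hypothesis covered it --- i.e., it silently exchanges $\sum_{i=1}^{n}$ for $\sum_{i\in I}$; nothing in the argument controls the outliers' distances to the moving line $P_{U_{t-1}}(l^{*})$. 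Moreover, your ``nearly orthonormal outliers'' remark sharpens into a counterexample to the statement as written (which carries no $\delta$-hypothesis): take $S_{0}=\emptyset$, put the inliers exactly on $l^{*}$ with $\sum_{i\in I}\sqd{x_{i}} = 2\eps^{T}\sum_{i=1}^{n}\sqd{x_{i}}$, and let the outliers be pairwise orthogonal unit vectors orthogonal to $l^{*}$. Then with probability $1-O(\eps^{T} m T)$ (where $m$ is the per-round sample size) no inlier is ever drawn, the accumulated span stays orthogonal to $l^{*}$, and the left side equals $\sum_{i\in I}\sqd{x_{i}} = 2\eps^{T}\sum_{i}\sqd{x_{i}}$, exceeding the right side $\eps^{T}\sum_{i}\sqd{x_{i}}$ (the $(1+\eps)$ term vanishes); so the success probability tends to $0$ as $T$ grows, not a constant. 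Your proposed repair is the right direction: the variant of Lemma \ref{lemma:bad-line} with additive error $\eps\sum_{i\in I}\sqd{x_{i}}$ is indeed true (the same three-line proof works, using $d(x_{i},l^{*})\leq\norm{x_{i}}$ for $i\in I$), but the bad-point sampling probability then becomes $\frac{\eps}{2}$ times the ratio of inlier residual mass to total residual mass, and keeping that ratio bounded below across rounds is exactly where a $\delta$-type hypothesis, as in Theorem \ref{thm:multi-line}, must enter; note also that using it naively (rescaling $\eps$ to $\eps\delta$) inflates the sample size to $\operatorname{poly}(1/\eps\delta)$ rather than the advertised $\log(1/\delta)$ dependence. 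So your proof is incomplete, but what is missing is not an oversight on your part: it is a hole in the theorem as stated, and any correct completion needs the extra hypothesis you identified.
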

%%%% PROOF IS COMMENTED AND DIFFERED TO APPENDIX
 
\begin{proof}
We use induction on $T$ and apply Theorem \ref{thm:adaptive-line} repeatedly. The base case $T=1$ is trivially implied by Theorem \ref{thm:adaptive-line}. 

Consider any $t \in [T]$. Applying Theorem \ref{thm:adaptive-line} to any given $S_{0} \cup S_{1} \cup \dotsc \cup S_{t-1}$ as its initial subset and picking an i.i.d. sample $S_{t}$ of $O\left((1/\eps^{2})~ \log(1/\eps)\right)$ points with probability of $x_{i}$ proportional to $d(x_{i}, \linspan{S_{0} \cup S_{1} \cup \dotsc \cup S_{t-1}})^{2}$, we get with a constant probability 
\begin{align*}
&\sum_{i \in I} d(x_{i}, P_{S_{0} \cup S_{1} \cup \dotsc \cup S_{t}}(l^{*}))^{2}\\  &\leq \sum_{i \in I} d(x_{i}, l^{*})^{2} + \eps~ \sum_{i=1}^{n} d(x_{i}, \linspan{S_{0} \cup S_{1} \cup \dotsc \cup S_{t-1}})^{2} \\
& \leq \sum_{i \in I} d(x_{i}, l^{*})^{2} + \eps~ \sum_{i=1}^{n} d(x_{i}, P_{S_{0} \cup S_{1} \cup \dotsc \cup S_{t-1}}(l^{*}))^{2}.
\end{align*}
By repeating this $O(\log T)$ times and taking the best, we can boost this success probability from a constant to $1 - 1/2T$. Since the projection of $l^{*}$ onto the span of these $O\left((1/\eps^{2})~ \log(1/\eps)~ \log T\right)$ points taken together can only be better, we get that for a larger sample $S_{t}$ of $O\left((1/\eps^{2})~ \log(1/\eps)~ \log T\right)$ points with probability of $x_{i}$ proportional to $d(x_{i}, \linspan{S_{0} \cup S_{1} \cup \dotsc \cup S_{t-1}})^{2}$, we get with at least $1 - 1/2T$ probability
\begin{align*}
&\sum_{i \in I} d(x_{i}, P_{S_{0} \cup S_{1} \cup \dotsc \cup S_{t}}(l^{*}))^{2}\\  &\leq \sum_{i \in I} d(x_{i}, l^{*})^{2} + \eps~ \sum_{i=1}^{n} d(x_{i}, \linspan{S_{0} \cup S_{1} \cup \dotsc \cup S_{t-1}})^{2} \\
& \leq \sum_{i \in I} d(x_{i}, l^{*})^{2} + \eps~ \sum_{i=1}^{n} d(x_{i}, P_{S_{0} \cup S_{1} \cup \dotsc \cup S_{t-1}}(l^{*}))^{2}.
\end{align*}
By union bound, the probability there is some $t \in [T]$ for which the above fails to hold is at most $T/2T = 1/2$. Therefore, the above holds for all $t=1, 2, \dotsc, T$ is at least $1/2$. In that case, putting these bounds together for $t=1, 2, \dotsc, T$ we get with probability at least $1/2$,
\begin{align*}
& \sum_{i \in I} d(x_{i}, P_{S_{0} \cup S_{1} \cup \dotsc \cup S_{T}}(l^{*}))^{2} \\ 
& \leq \sum_{i \in I} d(x_{i}, l^{*})^{2} + \eps~ \sum_{i=1}^{n} d(x_{i}, \linspan{S_{0} \cup S_{1} \cup \dotsc \cup S_{T-1}})^{2} \\
& \leq \sum_{i \in I} d(x_{i}, l^{*})^{2} + \eps~ \sum_{i=1}^{n} d(x_{i}, P_{S_{0} \cup S_{1} \cup \dotsc \cup S_{T-1}}(l^{*}))^{2} \\
& \leq \sum_{i \in I} d(x_{i}, l^{*})^{2} + \eps~ \left(\sum_{i \in I} d(x_{i}, l^{*})^{2} + \eps~ \sum_{i=1}^{n} d(x_{i}, P_{S_{0} \cup S_{1} \cup \dotsc \cup S_{T-2}}(l^{*}))^{2}\right) \\
& = (1 + \eps) \sum_{i \in I} d(x_{i}, l^{*})^{2} + \eps^{2}~ \sum_{i=1}^{n} d(x_{i}, P_{S_{0} \cup S_{1} \cup \dotsc \cup S_{T-2}}(l^{*}))^{2} \\
& \dotsc \\
& \leq (1 + \eps + \eps^{2} + \dotsc) \sum_{i \in I} d(x_{i}, l^{*})^{2} + \eps^{T}~ \sum_{i=1}^{n} d(x_{i}, \linspan{S_{0}})^{2} \\
& \leq \frac{1}{1-\eps}~ \sum_{i \in I} d(x_{i}, l^{*})^{2} + \eps^{T}~ \sum_{i=1}^{n} d(x_{i}, \linspan{S_{0}})^{2} \\
& \leq (1 + 2\eps) \sum_{i \in I} d(x_{i}, l^{*})^{2} + \eps^{T}~ \sum_{i=1}^{n} d(x_{i}, \linspan{S_{0}})^{2}, \qquad \text{for $\eps \leq 1/2$}.
\end{align*}
Using $\eps/2$ instead of $\eps$ in the above bound completes the proof of Theorem \ref{thm:adaptive-line-T}.

\end{proof}

Now assume that the optimal inlier error for $l^{*}$ is at least $\delta$ times its error over the entire data, that is, $\sum_{i \in I} d(x_{i}, l^{*})^{2} \geq \delta~ \sum_{i=1}^{n} d(x_{i}, l^{*})^{2}$. In that case, we can show a much stronger multiplicative $(1+\eps)$-approximation instead of additive one. %A proof of the following theorem is deferred to the Appendix. 

\begin{theorem} \label{thm:multi-line}
For any given points $x_{1}, x_{2}, \dotsc, x_{n} \in \R^{d}$, let $I \subseteq [n]$ be the set of optimal $(1-\alpha)n$ inliers and $l^{*}$ be the optimal line that minimizes their squared distance. Suppose $\sum_{i \in I} d(x_{i}, l^{*})^{2} \geq \delta~ \sum_{i=1}^{n} d(x_{i}, l^{*})^{2}$. For any $0 < \eps < 1$, we can efficiently find a subset $S$ of size  \\ $O ((1/\eps^{2}) \log(1/\eps) \log(1/\delta) \log\log(1/\delta))$ \textit{s.t.}
\[
\sum_{i \in I} d(x_{i}, P_{S}(l^{*}))^{2} \leq (1+\eps)~ \sum_{i \in I} d(x_{i}, l^{*})^{2}, \quad \text{with a constant probability}.
\]
% with a constant probability.
\end{theorem}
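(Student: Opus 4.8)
The plan is to reduce the desired multiplicative guarantee to the additive guarantee already established in Theorem~\ref{thm:adaptive-line-T}. That theorem gives, for any starting set $S_{0}$ and $T$ rounds of adaptive squared-distance sampling, a subset $S = S_{0} \cup S_{1} \cup \dotsb \cup S_{T}$ with
\[
\sum_{i \in I} d(x_{i}, P_{S}(l^{*}))^{2} \leq (1+\eps) \sum_{i \in I} d(x_{i}, l^{*})^{2} + \eps^{T} \sum_{i=1}^{n} d(x_{i}, \linspan{S_{0}})^{2}.
\]
To turn this into a purely multiplicative $(1+O(\eps))$ bound I only need the additive term $\eps^{T} \sum_{i} d(x_{i}, \linspan{S_{0}})^{2}$ to be at most $\eps \sum_{i \in I} d(x_{i}, l^{*})^{2}$. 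Everything therefore hinges on (i) choosing an initial $S_{0}$ whose \emph{global} squared-distance error $\sum_{i} d(x_{i}, \linspan{S_{0}})^{2}$ is comparable to the inlier optimum, and (ii) choosing the number of rounds $T$ large enough to kill the residual factor.

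For step (i) I would first produce $S_{0}$ by ordinary (non-robust) squared-length sampling over all $n$ points. Let $\opt_{\text{full}} = \min_{\text{line } l} \sum_{i=1}^{n} d(x_{i}, l)^{2}$ be the best-fit line over the entire data. Sampling a single index $j$ with probability proportional to $\sqd{x_{j}}$ and setting $S_{0} = \{j\}$ is exactly volume sampling for $k=1$, which gives $\expec{\sum_{i=1}^{n} d(x_{i}, \linspan{S_{0}})^{2}} \leq 2\, \opt_{\text{full}}$; by Markov's inequality and $O(1)$ independent repetitions, keeping the best, we get $\sum_{i=1}^{n} d(x_{i}, \linspan{S_{0}})^{2} \leq 4\, \opt_{\text{full}}$ with probability arbitrarily close to $1$, using only $O(1)$ points. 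The hypothesis of the theorem enters precisely here: since $l^{*}$ is one particular line, $\opt_{\text{full}} \leq \sum_{i=1}^{n} d(x_{i}, l^{*})^{2}$, and the assumption $\sum_{i \in I} d(x_{i}, l^{*})^{2} \geq \delta \sum_{i=1}^{n} d(x_{i}, l^{*})^{2}$ then yields
\[
\sum_{i=1}^{n} d(x_{i}, \linspan{S_{0}})^{2} \leq 4\, \opt_{\text{full}} \leq \frac{4}{\delta} \sum_{i \in I} d(x_{i}, l^{*})^{2}.
\]

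Finally I would feed this $S_{0}$ into Theorem~\ref{thm:adaptive-line-T}. Combining the two displays, the additive term is at most $\eps^{T} \cdot (4/\delta) \sum_{i \in I} d(x_{i}, l^{*})^{2}$, which is at most $\eps \sum_{i \in I} d(x_{i}, l^{*})^{2}$ as soon as $\eps^{T-1} \leq \delta/4$, i.e.\ for $T = O(\log(1/\delta)/\log(1/\eps)) = O(\log(1/\delta))$ rounds at constant $\eps$. This gives $\sum_{i \in I} d(x_{i}, P_{S}(l^{*}))^{2} \leq (1 + 2\eps) \sum_{i \in I} d(x_{i}, l^{*})^{2}$, and rescaling $\eps$ finishes the multiplicative bound. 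For the size, each of the $T$ rounds contributes $O((1/\eps^{2}) \log(1/\eps) \log T)$ points; with $T = O(\log(1/\delta))$ we have $\log T = O(\log\log(1/\delta))$, so $\size{S} = O((1/\eps^{2}) \log(1/\eps) \log(1/\delta) \log\log(1/\delta))$ as claimed, plus the $O(1)$ points of $S_{0}$. Correctness holds with constant probability after a union bound over the (boosted) success events of the $S_{0}$ construction and of each sampling round. The step I expect to be the crux is (i): relating the global error $\sum_{i} d(x_{i}, \linspan{S_{0}})^{2}$, which is exactly the additive budget in Theorem~\ref{thm:adaptive-line-T}, to the inlier optimum $\sum_{i \in I} d(x_{i}, l^{*})^{2}$. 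This is where the $\delta$-assumption is indispensable, and it is what allows the exponentially decaying additive error to be wiped out in only $O(\log(1/\delta))$ rounds rather than a number of rounds that depends on the scale of the data.
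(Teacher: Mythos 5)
Your proposal is correct and follows essentially the same route as the paper: initialize $S_{0}$ via squared-length (Frieze--Kannan--Vempala) sampling, use the $\delta$-assumption to bound $\sum_{i=1}^{n} d(x_{i}, \linspan{S_{0}})^{2}$ by $O(1/\delta) \sum_{i \in I} d(x_{i}, l^{*})^{2}$, and then invoke Theorem~\ref{thm:adaptive-line-T} with $T = O(\log(1/\delta))$ rounds to drive the additive term below $\eps \sum_{i \in I} d(x_{i}, l^{*})^{2}$. Your version is in fact slightly more explicit than the paper's (the Markov-plus-repetition boosting for $S_{0}$, the exact condition $\eps^{T-1} \leq \delta/4$, and the $\log T = O(\log\log(1/\delta))$ size accounting), but the underlying argument is identical.
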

%%%% PROOF IS COMMENTED AND MOVED TO APPENDIX
 
\begin{proof}
We know from the squared-length sampling of Frieze et al. \cite{FriezeKV98}) that a single point $S_{0} = \{i\}$ picked with probability proportional to $\norm{x_{i}}^{2}$, gives a $2$-approximation to the best line $l$ that minimizes the sum of squared distances for all the points $x_{1}, x_{2}, \dotsc, x_{n}$, that is, $\sum_{i=1}^{n} d(x_{i}, \linspan{S_{0}})^{2} \leq 2~ \sum_{i=1}^{n} d(x_{i}, l)^{2}$. Therefore, in the presence of outliers, using the above assumption about inlier vs. total error, the same initial $S_{0}$ gives a multiplicative approximation, with a constant probability. % as follows. 
\begin{align*}
\sum_{i=1}^{n} d(x_{i}, \linspan{S_{0}})^{2} & \leq 2~ \sum_{i=1}^{n} d(x_{i}, l)^{2} 
 \leq 2~ \sum_{i=1}^{n} d(x_{i}, l^{*})^{2} 
 \leq \frac{2}{\delta}~ \sum_{i \in I} d(x_{i}, l^{*})^{2}.
\end{align*}
Now we can plug this into Theorem \ref{thm:adaptive-line-T} using $T = O\left(\log (1/\delta)\right)$ adaptive rounds to reduce the multiplicative approximation factor from $(1+\eps)/\delta$ down to a multiplicative $(1+\eps)$-approximation when compared to $\sum_{i \in I} d(x_{i}, l^{*})^{2}$. In the end, $S = S_{0} \cup S_{1} \cup \dotsc \cup S_{T}$ satisfies 
$
\sum_{i \in I} d(x_{i}, P_{S}(l^{*}))^{2} \leq (1+\eps)~ \sum_{i \in I} d(x_{i}, l^{*})^{2},
$ \text{with a constant probability}.
\end{proof}
 
%%%%%% PRROF ENDS HERE
\section{$\ell_{p}$ subspace approximation with outliers}
\label{sec:subspace}
%\vspace{-2mm}
Given an instance of $k$-dimensional subspace approximation with outliers as points $x_{1}, x_{2}, \dotsc, x_{n} \in \R^{d}$, a positive integer $1 \leq k \leq d$, a real number $p \geq 1$, and an outlier parameter $0 < \alpha < 1$, let the optimal $k$-dimensional linear subspace be $V^{*}$ that minimizes the $\ell_{p}$ error summed over its nearest $(1-\alpha)n$ points from $x_{1}, x_{2}, \dotsc, x_{n}$. Let $I \subseteq [n]$ denote the subset of indices of these nearest $(1-\alpha)n$ points to $V^{*}$. In other words, $I = N_{\alpha}(V^{*})$ consists of the indices of the optimal inliers. Given any subset $S \subseteq [n]$, let $l_{S}$ be the line or direction in it that makes the smallest angle with $V^{*}$, and define the subspace $W_{S}$ as the rotation of $V^{*}$ along this angle so as to contain $l_{S}$. To be precise, let $l^{*}$ be the projection of $l_{S}$ onto $V^{*}$ and let $W^{*}$ be the orthogonal complement of $l^{*}$ in $V^{*}$. Observe that $W_{S}$ is the $k$-dimensional linear subspace spanned by $l_{S}$ and $W^{*}$. We say that $S$ contains a line $l_{S}$ useful for \emph{additive} approximation if
\begin{align}
\sum_{i \in I} d(x_{i}, W_{S})^{p} \leq \sum_{i \in I} d(x_{i}, V^{*})^{p} + \eps~ \sum_{i=1}^{n} \norm{x_{i}}^{p}. \label{eq:add-approx-sub}
\end{align}

Define the set of \emph{bad} points as the subset of inliers $I$ whose error w.r.t. $W_{S}$ is somewhat larger than their error w.r.t. the optimal subspace $V^{*}$, that is, $B(S) = \{i \in I \suchthat d(x_{i}, W_{S})^{p} > (1+\eps/2)~ d(x_{i}, V^{*})^{p}\}$ and \emph{good} points as $G(S) = I \setminus B(S)$. The following lemma shows that sampling $i$-th points with probability proportional to $\norm{x_{i}}^{p}$ picks a \emph{bad} point $i \in B(S)$ with probability at least $\eps/2$. %A proof of the following lemma is deferred to the Appendix. 

\begin{lemma} \label{lemma:bad-sub}
If $S \subseteq [n]$ does not satisfy \eqref{eq:add-approx-sub}, then 
$
\sum_{i \in B(S)} \norm{x_{i}}^{p} \geq \frac{\eps}{2}~ \sum_{i=1}^{n} \norm{x_{i}}^{p}.
$
\end{lemma}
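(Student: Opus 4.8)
The plan is to mirror the contrapositive argument used for the line case in Lemma \ref{lemma:bad-line}, now carried out at the level of $p$-th powers of distances to the rotated subspace $W_{S}$. I would assume for contradiction that the conclusion fails, i.e. $\sum_{i \in B(S)} \norm{x_{i}}^{p} < (\eps/2) \sum_{i=1}^{n} \norm{x_{i}}^{p}$, and then derive that $S$ in fact satisfies the additive-approximation inequality \eqref{eq:add-approx-sub}, contradicting the hypothesis that it does not. The whole proof is a four-line chain of inequalities obtained by splitting the inlier sum $\sum_{i \in I} d(x_{i}, W_{S})^{p}$ over the good and bad index sets $G(S)$ and $B(S)$.

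The key steps, in order, are as follows. First, split $\sum_{i \in I} d(x_{i}, W_{S})^{p} = \sum_{i \in G(S)} d(x_{i}, W_{S})^{p} + \sum_{i \in B(S)} d(x_{i}, W_{S})^{p}$. Second, on the good set use the defining property $G(S) = I \setminus B(S)$, which gives $d(x_{i}, W_{S})^{p} \leq (1 + \eps/2) d(x_{i}, V^{*})^{p}$ for every $i \in G(S)$. Third, on the bad set use the trivial bound $d(x_{i}, W_{S})^{p} \leq \norm{x_{i}}^{p}$ (distance to any linear subspace is at most the distance to the origin), so that $\sum_{i \in B(S)} d(x_{i}, W_{S})^{p} \leq \sum_{i \in B(S)} \norm{x_{i}}^{p}$; this is where the assumed bound $\sum_{i \in B(S)} \norm{x_{i}}^{p} < (\eps/2) \sum_{i=1}^{n} \norm{x_{i}}^{p}$ enters. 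Fourth, enlarge $G(S)$ back to $I$ in the good-set term (since all summands are nonnegative) to obtain the coefficient $(1 + \eps/2)$ in front of $\sum_{i \in I} d(x_{i}, V^{*})^{p}$, and finally absorb the leftover $(\eps/2) \sum_{i \in I} d(x_{i}, V^{*})^{p} \leq (\eps/2) \sum_{i=1}^{n} \norm{x_{i}}^{p}$ into the additive term, yielding exactly \eqref{eq:add-approx-sub}.

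I do not expect any serious obstacle; the argument is structurally identical to Lemma \ref{lemma:bad-line}. The only point that requires a moment of care is the bound $d(x_{i}, W_{S})^{p} \leq \norm{x_{i}}^{p}$ on the bad set: this replaces the role played by $\sqd{x_{i}}$ in the $p=2$ case and relies on $W_{S}$ being a \emph{linear} subspace through the origin, so that projecting a point onto $W_{S}$ never increases its distance beyond its norm. Since the lemma governs sampling with probability proportional to $\norm{x_{i}}^{p}$, this is the correct and natural estimate, and the rest is a routine regrouping of nonnegative terms.
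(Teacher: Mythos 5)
Your proposal is correct and matches the paper's own proof essentially line for line: the same proof by contradiction, the same split of the inlier sum over $G(S)$ and $B(S)$, the bound $d(x_{i}, W_{S})^{p} \leq \norm{x_{i}}^{p}$ on the bad set (valid since $W_{S}$ is a linear subspace through the origin), and the same absorption of the leftover $(\eps/2)\sum_{i \in I} d(x_{i}, V^{*})^{p}$ term into $(\eps/2)\sum_{i=1}^{n}\norm{x_{i}}^{p}$. No gaps; nothing further needed.
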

%%% BELOW PROOF IS COMMENTED %%%%%

\begin{proof}
Suppose $\sum_{i \in B(S)} \norm{x_{i}}^{p} < \eps/2~ \sum_{i=1}^{n} \norm{x_{i}}^{p}$. Then it implies
\begin{align*}
\sum_{i \in I} d(x_{i}, W_{S})^{p} & = \sum_{i \in G(S)} d(x_{i}, W_{S})^{p} + \sum_{i \in B(S)} d(x_{i}, W_{S})^{p} \\
& \leq \left(1 + \frac{\eps}{2}\right)~ \sum_{i \in G(S)} d(x_{i}, V^{*})^{p} + \sum_{i \in B(S)} \norm{x_{i}}^{p} \\
& \leq \left(1 + \frac{\eps}{2}\right)~ \sum_{i \in I} d(x_{i}, V^{*})^{p} + \frac{\eps}{2}~ \sum_{i=1}^{n} \norm{x_{i}}^{p} \\
& \leq \sum_{i \in I} d(x_{i}, V^{*})^{p} + \eps~ \sum_{i=1}^{n} \norm{x_{i}}^{p},
\end{align*}
which contradicts our assumption that $S$ does not satisfy \eqref{eq:add-approx-sub}.
\qedhere
\end{proof}
 
%%%% ABOVE PROOF IS COMMENTED %%%%%%%%

\subsection{Additive approximation: one dimension at a time}
Below we show that a \emph{bad} point $i \in B(S)$ can be used to improve $W_{S}$, or in other words, $\linspan{S \cup \{i\}}$ contains a line $l_{S \cup \{i\}}$ that is much closer to $V^{*}$ than $l_{S}$. %We defer a proof of the following theorem to the Appendix. 
\begin{theorem} \label{thm:additive-sub-line}
For any given points $x_{1}, x_{2}, \dotsc, x_{n} \in \R^{d}$, let $S$ be an i.i.d. sample of $O\left((p^{2}/\eps^{2})~ \log(1/\eps)\right)$ points picked with probabilities proportional to $\norm{x_{i}}^{p}$. Let $I \subseteq [n]$ be the set of optimal $(1-\alpha)n$ inliers and $V^{*}$ be the optimal subspace that minimizes the $\ell_{p}$ error over the inliers. Also let $W_{S}$ be defined as in the beginning of Section \ref{sec:subspace}. Then, with a constant probability, we have
\[
\sum_{i \in I} d(x_{i}, W_{S})^{p} \leq \sum_{i \in I} d(x_{i}, V^{*})^{p} + \eps~ \sum_{i=1}^{n} \norm{x_{i}}^{p}.
\]
\end{theorem}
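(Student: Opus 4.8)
The plan is to mirror the proof of Theorem~\ref{thm:additive-line} step for step, replacing squared-length sampling by $\norm{x_i}^{p}$-sampling and the Euclidean angle estimate by its $\ell_{p}$ counterpart. The key structural observation is that $W_{S}$ and $V^{*}$ share the $(k-1)$-dimensional subspace $W^{*}$ and differ only in the single direction in which $l^{*}$ is replaced by $l_{S}$; projecting every $x_{i}$ orthogonally to $W^{*}$ therefore collapses the whole question to the one-dimensional line-versus-point situation already handled for $k=1$, and it suffices to drive the sine of the angle between $l_{S}$ and $l^{*}$ below a small threshold. First I would invoke Lemma~\ref{lemma:bad-sub}: as long as the current sample $S$ fails~\eqref{eq:add-approx-sub}, an index drawn with probability proportional to $\norm{x_{i}}^{p}$ lands in the bad set $B(S)$ with probability at least $\eps/2$. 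This is the only point at which the outlier set $I$ enters, and since Lemma~\ref{lemma:bad-sub} has already absorbed the inlier-versus-whole-data accounting, the argument here is identical to the line case.

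Next comes the geometric core, the $\ell_{p}$ analogue of the Angle-drop step. For a bad index $i \in B(S)$ the definition gives $d(x_{i}, W_{S})^{p} > (1+\eps/2)\, d(x_{i}, V^{*})^{p}$, i.e. in terms of distances $d(x_{i}, W_{S}) > (1+\eps/2)^{1/p}\, d(x_{i}, V^{*})$, a relative gap of only about $\eps/2p$. I would feed this into the Angle-drop Lemma of Shyamalkumar--Varadarajan (Lemma~13, Appendix~A of~\cite{DeshpandeV07}) to conclude that $\linspan{S \cup \{i\}}$ contains a direction $l_{S \cup \{i\}}$ whose angle with $l^{*}$ satisfies $\abs{\sin\theta_{\text{new}}} \leq \left(1 - \Theta(\eps/p)\right)\, \abs{\sin\theta_{\text{old}}}$; the $1/p$ compared with the $p=2$ case is precisely the price of the weakened gap $(1+\eps/2)^{1/p}$. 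Iterating, each bad index contracts $\abs{\sin\theta}$ by a factor $1 - \Theta(\eps/p)$.

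Finally I would translate a small angle back into the additive bound and apply concentration. Using the elementary estimate $d(x_{i}, W_{S}) \leq d(x_{i}, V^{*}) + \abs{\sin\theta}\,\norm{x_{i}}$ together with the convexity inequality $(a+b)^{p} - a^{p} \leq p\,(a+b)^{p-1} b$, one checks that once $\abs{\sin\theta} = O(\eps/p)$ the excess $\sum_{i \in I}\left(d(x_{i}, W_{S})^{p} - d(x_{i}, V^{*})^{p}\right)$ is at most $\eps \sum_{i=1}^{n} \norm{x_{i}}^{p}$, which is exactly~\eqref{eq:add-approx-sub}. Since each bad index shrinks $\abs{\sin\theta}$ by $1-\Theta(\eps/p)$ and each sampled index is bad with probability at least $\eps/2$, a Chernoff--Hoeffding bound shows that an i.i.d.\ $\norm{x_{i}}^{p}$-sample of the stated size $O\left((p^{2}/\eps^{2})\,\log(1/\eps)\right)$ contains enough bad indices with constant probability; taking $l_{S}$ to be the direction in its span closest to $V^{*}$ then completes the argument.

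The step I expect to be the main obstacle is the $\ell_{p}$ Angle-drop estimate and the careful bookkeeping of its $p$-dependence. Two separate effects must be tracked: the per-step contraction weakens to $1-\Theta(\eps/p)$ because the bad-point gap is only $(1+\eps/2)^{1/p}$, and the target sine must be pushed down to $\Theta(\eps/p)$ because converting an angular error into an $\ell_{p}$-distance error costs a further factor of $p$ through the convexity inequality above. Making the constants in the Angle-drop Lemma explicit for general $p$ --- where the clean Pythagorean identity available at $p=2$ is lost --- is where the real care is needed, and it is the combination of these two $p$-dependent effects that produces the $p^{2}/\eps^{2}$ sample size.
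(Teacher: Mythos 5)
Your proposal follows essentially the same route as the paper's proof: Lemma~\ref{lemma:bad-sub} to guarantee that each sampled index is \emph{bad} with probability at least $\eps/2$ while \eqref{eq:add-approx-sub} fails, the Shyamalkumar--Varadarajan angle-drop lemma to contract $\abs{\sin\theta}$ by a factor $(1-\Theta(\eps/p))$ per bad point, and a Chernoff--Hoeffding bound to show the stated sample size suffices with constant probability. The only divergence is bookkeeping: the paper drives $\abs{\sin\theta}$ down to $\eps^{1/p}$ by citing the machinery of \cite{DeshpandeV07}, whereas you target $\Theta(\eps/p)$ via an explicit convexity inequality --- a somewhat more self-contained accounting that still fits within the claimed $O\left((p^{2}/\eps^{2})\log(1/\eps)\right)$ bound.
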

%%%%% PROOF  IS COMMENTED and MOVED TO APPENDIX
 
\begin{proof}
Lemma \ref{lemma:bad-sub} shows that by sampling with probability proportional to $\norm{x_{i}}^{p}$, the probability of picking $i \in B(S)$ is at least $\eps/2$. Now for any $i \in B(S)$, we have $d(x_{i}, W_{S})^{p} > (1+\eps/2)~ d(x_{i}, V^{*})^{p}$, by definition. 

The angle-drop lemma of \cite{ShyamalkumarV12} (see Lemma 13, Appendix A of \cite{DeshpandeV07}) says that if $d(x_{i}, W_{S})^{p} > (1+\eps/2)~ d(x_{i}, V^{*})^{p}$ then $\linspan{S \cup \{i\}}$ contains a line $l_{S \cup \{i\}}$ closer to $V^{*}$ by a multiplicative factor. That is, let $\theta_{\text{old}}$ be the angle between $l_{S}$ and $V^{*}$ and let $\theta_{\text{new}}$ be the angle between $l_{S \cup \{i\}}$ and $V^{*}$, then $\abs{\sin{\theta_{\text{new}}}} \leq (1 - \eps/4p)~ \abs{\sin \theta_{\text{old}}}$. We need $O\left((p^{2}/\eps^{2}) \log(1/\eps)\right)$ such multiplicative improvements to bring $\abs{\sin \theta}$ down to $\eps^{1/p}$.

The probability of $i \in B(S)$ is at least $\eps/2$ be Lemma \ref{lemma:bad-sub}, and the sampling distribution is independent of $S$. Thus, using a careful Chernoff bound, we can show that an i.i.d. sample of $O\left((p^{2}/\eps^{2})~ \log(1/\eps)\right)$ points picked with probability proportional to $\norm{x_{i}}^{p}$ gives us, with a constant probability, enough bad points to reduce the sine of the angle between $l_{S}$ and $V^{*}$ to less than $\eps^{1/p}$. Therefore, for an i.i.d. sample $S$ of size $\size{S} = O\left((p^{2}/\eps^{2})~ \log(1/\eps)\right)$, where the $i$-th point is picked with probability proportional to $\norm{x_{i}}^{p}$, we have
\[
\sum_{i \in I} d(x_{i}, W_{S})^{p} \leq \sum_{i \in I} d(x_{i}, V^{*})^{p} + \eps~ \sum_{i=1}^{n} \norm{x_{i}}^{p}, \quad \text{with a constant probability}.
\]
% with a constant probability.
\end{proof}
 
%%%% COMMENT ENDS HERE %%%%%%%%%%%%%%%%%%
Note that we can start with any given initial subspace $S_{0}$ and prove a similar result for sampling points with probability proportional to $d(x_{i}, \linspan{S_{0}})^{p}$.
\begin{theorem} \label{thm:adaptive-sub-line}
For any given points $x_{1}, x_{2}, \dotsc, x_{n} \in \R^{d}$ and an initial subspace $S_{0}$, let $S$ be an i.i.d. sample of $O\left((p^{2}/\eps^{2})~ \log(1/\eps)\right)$ points picked with probabilities proportional to $d(x_{i}, \linspan{S_{0}})^{p}$. Let $I \subseteq [n]$ be the set of optimal $(1-\alpha)n$ inliers and $V^{*}$ be the optimal subspace that minimizes the $\ell_{p}$ error over the inliers. Also let $W_{S}$ be defined as in the beginning of Section \ref{sec:subspace}. Then, with a constant probability, we have
\[
\sum_{i \in I} d(x_{i}, W_{S \cup S_{0}})^{p} \leq \sum_{i \in I} d(x_{i}, V^{*})^{p} + \eps~ \sum_{i=1}^{n} d(x_{i}, \linspan{S_{0}})^{p}.
\]
\end{theorem}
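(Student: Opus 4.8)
The plan is to reduce this to Theorem~\ref{thm:additive-sub-line} by replacing the point set with its components orthogonal to $\linspan{S_0}$, exactly as Theorem~\ref{thm:adaptive-line} generalized Theorem~\ref{thm:additive-line} in the line case. Concretely, for each $i$ let $y_i$ denote the orthogonal projection of $x_i$ away from $\linspan{S_0}$, so that $\norm{y_i} = d(x_i, \linspan{S_0})$. Then sampling the $i$-th point with probability proportional to $d(x_i, \linspan{S_0})^p$ is precisely $p$-th power sampling on the points $y_1, \dotsc, y_n$, and the additive error term $\eps \sum_i d(x_i, \linspan{S_0})^p$ appearing in the conclusion is exactly $\eps \sum_i \norm{y_i}^p$. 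So I expect the entire argument of Theorem~\ref{thm:additive-sub-line} to go through once it is rephrased in terms of the $y_i$'s.

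First I would restate the bad-set machinery in this projected setting. Define $B(S \cup S_0) = \{i \in I \suchthat d(x_i, W_{S \cup S_0})^p > (1+\eps/2)\, d(x_i, V^{*})^p\}$ and rerun the proof of Lemma~\ref{lemma:bad-sub} with $\norm{x_i}^p$ replaced throughout by $d(x_i, \linspan{S_0})^p$; this should show that if the desired additive bound fails, then $\sum_{i \in B(S \cup S_0)} d(x_i, \linspan{S_0})^p \geq \frac{\eps}{2} \sum_{i=1}^{n} d(x_i, \linspan{S_0})^p$, so that $p$-th power sampling on the $y_i$'s hits a bad point with probability at least $\eps/2$. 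Next I would invoke the angle-drop lemma of \cite{ShyamalkumarV12} exactly as in Theorem~\ref{thm:additive-sub-line}: each bad point $i$ adjoined to $S \cup S_0$ produces a line $l_{S \cup S_0 \cup \{i\}}$ whose angle with $V^{*}$ satisfies $\abs{\sin \theta_{\text{new}}} \leq (1 - \eps/4p)\, \abs{\sin \theta_{\text{old}}}$. Finally, the same Chernoff bound shows that an i.i.d.\ sample of $O\left((p^{2}/\eps^{2}) \log(1/\eps)\right)$ points yields enough bad points to drive $\abs{\sin \theta}$ below $\eps^{1/p}$, which is precisely what the additive guarantee requires.

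The step I expect to require the most care is the transfer of Lemma~\ref{lemma:bad-sub} to the projected setting, specifically the inequality that replaces the trivial bound $d(x_i, W_S)^p \leq \norm{x_i}^p$ used for bad points in the original proof. The natural analog $d(x_i, W_{S \cup S_0})^p \leq d(x_i, \linspan{S_0})^p$ is not immediate, since the rotated subspace $W_{S \cup S_0}$ need not contain $\linspan{S_0}$. The cleanest fix is to carry out the bad-point accounting entirely in the orthogonal complement of $\linspan{S_0}$, where the comparison subspace passes through the origin and the bound $d(y_i, \cdot) \leq \norm{y_i}$ is automatic; one must then check that the rotation defining $W_{S \cup S_0}$ and the angle-drop lemma are compatible with this projection, i.e.\ that improving the best line toward $V^{*}$ in the projected coordinates corresponds to the claimed improvement of $W_{S \cup S_0}$ for the original points. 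Once this compatibility is verified, the probability estimate is identical to that of Theorem~\ref{thm:additive-sub-line}, and the proof concludes.
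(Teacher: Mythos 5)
Your proposal follows essentially the same route as the paper: the paper's own proof of this theorem is a one-line reduction to Theorem~\ref{thm:additive-sub-line} (``similar to the proof\dots above''), carried out---exactly as in the line case, Theorem~\ref{thm:adaptive-line}---by rerunning the sampling and bad-point argument on the projections of the $x_i$'s orthogonal to $\linspan{S_0}$, which is precisely your plan. Your additional observation that the naive bound $d(x_i, W_{S \cup S_0})^p \leq d(x_i, \linspan{S_0})^p$ is not automatic (since $W_{S \cup S_0}$ need not contain $\linspan{S_0}$), together with your fix of carrying out the bad-point accounting entirely in the orthogonal complement, addresses a real subtlety that the paper's terse proof glosses over, so your write-up is, if anything, more careful than the original.
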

\begin{proof}
Similar to the proof of Theorem \ref{thm:additive-sub-line} above.
\end{proof}
Once we have a line $l_{S}$ that is close to $V^{*}$, we can project orthogonal to it and repeat the sampling again. The caveat is, we do not know $l_{S}$. One can get around this by projecting all the points $x_{1}, x_{2}, \dotsc, x_{n}$ to $\linspan{S}$ of the current sample $S$, and repeat.
\begin{theorem} \label{thm:additive-sub-sub}
For any given points $x_{1}, x_{2}, \dotsc, x_{n} \in \R^{d}$, let $S = S_{1} \cup S_{2} \cup \dotsc \cup S_{k}$ be a sample of $\tilde{O}\left(p^{2}k^{2}/\eps^{2}\right)$ points picked as follows: $S_{1}$ be an i.i.d. sample of $O\left((p^{2}k/\eps^{2})~ \log(k/\eps)\right)$ points picked with probability proportional to $\norm{x_{i}}^{p}$, $S_{2}$ be an i.i.d. sample of $O\left((p^{2}k/\eps^{2})~ \log(k/\eps)\right)$ points picked with probability proportional to $d(x_{i}, \linspan{S_{1}})^{p}$, and so on. 

Let $I \subseteq [n]$ be the set of optimal $(1-\alpha)n$ inliers and $V^{*}$ be the optimal subspace that minimizes the $\ell_{p}$ error over the inliers. Then, with a constant probability, $\linspan{S}$ contains a $k$-dimensional subspace $V_{S}$ such that
\[
\sum_{i \in I} d(x_{i}, V_{S})^{p} \leq \sum_{i \in I} d(x_{i}, V^{*})^{p} + \eps~ \sum_{i=1}^{n} \norm{x_{i}}^{p}.
\]
\end{theorem}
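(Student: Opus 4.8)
The plan is to capture the $k$ dimensions of $V^{*}$ one direction at a time over $k$ adaptive rounds, bootstrapping the single-line result of Theorem~\ref{thm:adaptive-sub-line}. In round $j$ I would sample $S_{j}$ with probability proportional to $d(x_{i}, \linspan{S_{1} \cup \dotsc \cup S_{j-1}})^{p}$; that is, I treat the residuals of the points orthogonal to the span accumulated so far as a fresh point set and run one step of the additive sampling on them. The adaptivity is forced by the caveat noted just before the statement: the angle-drop lemma only guarantees the \emph{existence} of a good line $l_{S_{j-1}}$ somewhere inside $\linspan{S_{1} \cup \dotsc \cup S_{j-1}}$, and we cannot extract it explicitly, so instead we project all of $x_{1}, \dotsc, x_{n}$ orthogonal to the \emph{entire} current span and recurse. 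Projecting orthogonal to the whole span removes at least the one captured direction, so it only helps.

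The engine is an inductive invariant. Running each round at a suitably rescaled accuracy $\eps'$, I claim that after round $j$ the subspace $\linspan{S_{1} \cup \dotsc \cup S_{j}}$ contains a $j$-dimensional subspace lining up with $j$ directions of an orthonormal frame of $V^{*}$, in the sense that the subspace $V^{(j)}$ obtained by rotating those $j$ dimensions of $V^{*}$ into the current span, while leaving the other $k-j$ dimensions of $V^{*}$ fixed, satisfies
\begin{align*}
\sum_{i \in I} d(x_{i}, V^{(j)})^{p} \leq \sum_{i \in I} d(x_{i}, V^{*})^{p} + (\text{additive error accrued through round } j).
\end{align*}
The base case $j=0$ is $V^{(0)} = V^{*}$ with zero accrued error. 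For the inductive step I apply Theorem~\ref{thm:adaptive-sub-line} with $S_{0} = S_{1} \cup \dotsc \cup S_{j-1}$ and accuracy $\eps'$: with constant probability it returns a line inside the enlarged span making a small angle with the as-yet-uncaptured part of $V^{*}$, and the one-step rotation it produces increases the inlier error by at most $\eps' \sum_{i=1}^{n} d(x_{i}, \linspan{S_{0}})^{p} \leq \eps' \sum_{i=1}^{n} \norm{x_{i}}^{p}$. Composing this rotation with $V^{(j-1)}$ yields $V^{(j)}$ and advances the invariant. After $k$ rounds every dimension of $V^{*}$ has been rotated into $\linspan{S}$, so $V_{S} = V^{(k)} \subseteq \linspan{S}$ is a genuine $k$-dimensional subspace.

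For the probability and the sample size, a single run of each round succeeds with only constant probability, so I would take the per-round sample size to be $O\left((p^{2} k/\eps^{2})\, \log(k/\eps)\right)$: this is large enough both to sharpen the accuracy in Theorem~\ref{thm:adaptive-sub-line} (via the $\eps \to \eps'$ rescaling) and, through repetition and the Chernoff bound used in the proof of Theorem~\ref{thm:additive-sub-line}, to boost each round's success probability to $1 - O(1/k)$. A union bound over the $k$ rounds then gives constant overall success, and the total sample size is $k \cdot O\left((p^{2} k/\eps^{2})\log(k/\eps)\right) = \tilde{O}\!\left(p^{2} k^{2}/\eps^{2}\right)$, as claimed.

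The step I expect to be the main obstacle is the error accounting across the $k$ rounds: choosing the per-round accuracy $\eps'$ and per-round sample size so that the $k$ rotation errors combine to the single additive budget $\eps \sum_{i} \norm{x_{i}}^{p}$, while keeping the total sample size at $\tilde{O}(p^{2} k^{2}/\eps^{2})$. Two things make this delicate. First, each round's error term is $\eps' \sum_{i} d(x_{i}, \linspan{S_{0}})^{p}$ against the \emph{current} residuals, and one must track how these residuals behave across rounds rather than pessimistically bounding each by $\sum_{i} \norm{x_{i}}^{p}$. Second, the rotations must be composed into a single $k$-dimensional $V_{S}$ with the inlier set $I$ held \emph{fixed} throughout: the frame-rotation argument of Deshpande--Varadarajan~\cite{DeshpandeV07} must be adapted so that each one-dimensional rotation perturbs the error of $V^{(j-1)}$ only additively and uniformly over the same $I$, rather than over the nearest-point set $N_{\alpha}(\cdot)$, which could drift from round to round. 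Reconciling the fixed inlier set with the DV07 composition is where the outlier version needs the most care.
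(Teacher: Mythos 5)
Your proposal is correct and follows essentially the same route as the paper, whose entire proof is a deferral to the dimension-by-dimension argument of Theorem 5 in \cite{DeshpandeV07}: capture one direction of $V^{*}$ per round via the angle-drop rotation (Theorem~\ref{thm:adaptive-sub-line}), project orthogonal to the whole accumulated span since the good line cannot be extracted explicitly, compose the rotations while keeping the optimal inlier set $I$ fixed, and union-bound over boosted per-round success probabilities. The subtlety you flag about holding $I$ fixed rather than letting $N_{\alpha}(\cdot)$ drift is exactly how the paper's statements are already set up (every bound is over the fixed $I$), so your reconstruction fills in precisely the details the paper leaves to the citation.
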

\begin{proof}
Similar to the proof of Theorem 5, Section 4.2 in \cite{DeshpandeV07}.
\end{proof}

\subsection{Multiplicative approximation}
We can convert the above additive approximation into a multiplicative $(1+\eps)$-approximation by using this adaptively, treating the projections of $x_{1}, x_{2}, \dotsc, x_{n}$ orthogonal to $\linspan{S}$ as our new points, and repeating the sampling. 
To begin with, here is the modified statement of the additive approximation that we need.
\begin{theorem} \label{thm:adaptive-sub}
For any given points $x_{1}, x_{2}, \dotsc, x_{n} \in \R^{d}$ and any initial subset $S_{0}$, let $S = S_{0} \cup S_{1} \cup \dotsc \cup S_{k}$ be a sample of $\size{S} = \tilde{O}(p^{2}k^{2}/\eps^{2})$ points, where $S_{1}$ be an i.i.d. sample of $O\left((p^{2}k/\eps^{2})~ \log(k/\eps)\right)$
points picked with probability proportional to $d(x_{i}, \linspan{S_{0}})^{p}$, $S_{2}$ be an i.i.d. sample of \\ $O((p^{2}k/\eps^{2})\log(k/\eps))$ points picked with probability proportional to $d(x_{i}, \linspan{S_{0} \cup S_{1}})^{p}$, and so on.  Let $I \subseteq [n]$ be the set of optimal $(1-\alpha)n$ inliers and $V^{*}$ be the optimal $k$-dimensional linear subspace that minimized the $\ell_{p}$ error over the inliers. Then, with a constant probability, $S$ contains a $k$-dimensional subspace $V_{S}$ such that 
\[
\sum_{i \in I} d(x_{i}, V_{S})^{p} \leq \sum_{i \in I} d(x_{i}, V^{*})^{p} + \eps~ \sum_{i=1}^{n} d(x_{i}, \linspan{S_{0}})^{p}.
\]
\end{theorem}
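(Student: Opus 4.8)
The plan is to treat this statement as the initial-subspace analogue of Theorem~\ref{thm:additive-sub-sub}: the only differences are that the sampling in every round is anchored at a span that grows out of $\linspan{S_0}$ rather than out of $\{0\}$, and that the additive error is measured against $\Phi_0 := \sum_{i=1}^n d(x_i, \linspan{S_0})^p$ instead of $\sum_i \norm{x_i}^p$. So I would run exactly the ``one dimension at a time'' construction of Theorem~\ref{thm:additive-sub-sub}, but plug in Theorem~\ref{thm:adaptive-sub-line} (the initial-subspace single-direction result) in place of Theorem~\ref{thm:additive-sub-line} as the per-round building block.

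Concretely, I would iterate over rounds $t = 1, 2, \dotsc, k$, maintaining the invariant that $\linspan{S_0 \cup S_1 \cup \dotsc \cup S_t}$ contains a good approximation to a version of $V^*$ in which $t$ of its directions have been rotated into the current sample span, formalized through the $W_S$ construction from the start of Section~\ref{sec:subspace}. In round $t$ I sample $S_t$ with probability proportional to $d(x_i, \linspan{S_0 \cup \dotsc \cup S_{t-1}})^p$, which is precisely the hypothesis of Theorem~\ref{thm:adaptive-sub-line} with initial subspace $\linspan{S_0 \cup \dotsc \cup S_{t-1}}$; that theorem then rotates one more direction of $V^*$ into the span at additive cost $\eta \sum_i d(x_i, \linspan{S_0 \cup \dotsc \cup S_{t-1}})^p$ for a per-round parameter $\eta$ of my choosing. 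Because the span only grows, $d(x_i, \linspan{S_0 \cup \dotsc \cup S_{t-1}})^p \le d(x_i, \linspan{S_0})^p$ for every $i$, so each round's additive cost is at most $\eta \Phi_0$ against the single fixed scale $\Phi_0$. After $k$ rounds all $k$ directions of $V^*$ have been rotated into $\linspan{S}$, yielding a genuine $k$-dimensional $V_S \subseteq \linspan{S}$; choosing $\eta$ small enough that the $k$ per-round contributions sum to $\eps \Phi_0$ delivers the claimed additive bound, and the extra factor of $k$ in the per-round sample size $O((p^2 k/\eps^2) \log(k/\eps))$ accounts both for this smaller $\eta$ and for the probability boosting below. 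The precise choice of $\eta$ and the telescoping of the flag errors are exactly as in \cite{DeshpandeV07}, Theorem~5.

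Two bookkeeping points need care. First, we never know the captured line $l_{S_0 \cup \dotsc \cup S_{t-1}}$ explicitly, so instead of projecting orthogonal to it I project orthogonal to the entire current span; since the captured line lies inside that span, this only shrinks residual distances, which is exactly what the monotonicity step above requires. Second, for the probability: each round succeeds only with constant probability, so I boost each to $1 - 1/(3k)$ by repeating the round $O(\log k)$ times and keeping the best sample (this $\log k$ is absorbed into the $\log(k/\eps)$ factor in the per-round sample size), and then a union bound over the $k$ rounds leaves an overall constant success probability.

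The main obstacle is the composition itself: showing that rotating the directions of $V^*$ into the span one at a time accumulates only additively in the error rather than compounding, while the sampling distribution in each later round is governed by the shrinking residual to the growing span. This telescoping is the substantive content of \cite{DeshpandeV07}, Theorem~5, which the proof of Theorem~\ref{thm:additive-sub-sub} already invokes; the one genuinely new ingredient needed here is the monotonicity observation $d(x_i, \linspan{S_0 \cup \dotsc \cup S_{t-1}})^p \le d(x_i, \linspan{S_0})^p$, which lets every round's error be charged against the single fixed scale $\Phi_0 = \sum_i d(x_i, \linspan{S_0})^p$ and makes the final additive bound independent of the intermediate spans.
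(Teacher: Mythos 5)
Your proposal is correct and follows essentially the same route as the paper: the paper's own proof is exactly the reduction to the one-dimension-at-a-time construction of Theorem~\ref{thm:additive-sub-sub} (equivalently, Theorem~5 of \cite{DeshpandeV07}), phrased as ``apply that theorem to the projections of the $x_i$ orthogonal to $\linspan{S_{0}}$,'' which is the same induction you run directly with Theorem~\ref{thm:adaptive-sub-line} as the per-round step and the monotonicity bound $d(x_i, \linspan{S_0 \cup \dotsb \cup S_{t-1}})^p \leq d(x_i, \linspan{S_0})^p$ making the additive error scale $\sum_{i=1}^{n} d(x_{i}, \linspan{S_{0}})^{p}$ explicit. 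Your unrolled version is, if anything, more detailed than the paper's two-sentence sketch, and your probability boosting (taking the union of repeated rounds, since the guarantee is monotone in the span) matches the paper's treatment in Theorem~\ref{thm:adaptive-line-T}.
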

\begin{proof}
Similar to the proof of Theorem \ref{thm:additive-sub-sub} but using the projections of $x_{i}$ orthogonal to $\linspan{S_{0}}$ as the point set.  Theorem \ref{thm:additive-sub-sub} is a special case with $S_{0} = \emptyset$.
\end{proof}

Repeating the result of Theorem \ref{thm:adaptive-sub} by sampling adaptively for multiple rounds brings the additive approximation error down exponentially in the number of rounds.
\begin{theorem} \label{thm:adaptive-sub-T}
For any given points $x_{1}, x_{2}, \dotsc, x_{n} \in \R^{d}$ and any initial subset $S_{0}$, let $S_{t}$ be a subset sampled by Theorem \ref{thm:adaptive-sub} after projecting the points orthogonal to $\linspan{S_{0} \cup \dotsc \cup S_{t-1}}$, for $1 \leq t \leq T$. Let $I \subseteq [n]$ be the set of optimal $(1-\alpha)n$ inliers and $V^{*}$ be the optimal $k$-dimensional linear subspace that minimizes the $\ell_{p}$ error over the inliers. Then, with a constant probability, $S = S_{0} \cup S_{1} \cup \dotsc \cup S_{T}$ of size $\size{S} = \tilde{O}\left((p^{2}k^{2}T \log T)/\eps^{2}\right)$ contains a $k$-dimensional linear subspace $V_{S}$ such that
\[
\sum_{i \in I} d(x_{i}, V_{S})^{p} \leq (1+\eps)~ \sum_{i \in I} d(x_{i}, V^{*})^{p} + \eps^{T}~ \sum_{i=1}^{n} d(x_{i}, \linspan{S_{0}})^{p}.
\]
\end{theorem}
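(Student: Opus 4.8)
The plan is to mirror the proof of Theorem \ref{thm:adaptive-line-T}, replacing the single-round line primitive by its subspace counterpart Theorem \ref{thm:adaptive-sub} and replacing the projected optimal line $P_{S}(l^{*})$ by the $k$-dimensional subspace $V_{S} \subseteq \linspan{S}$ it produces. First I would set up an induction on $T$. The base case $T = 1$ is exactly Theorem \ref{thm:adaptive-sub} applied with initial subset $S_{0}$, which gives, with constant probability, $\sum_{i \in I} d(x_{i}, V_{S_{0} \cup S_{1}})^{p} \leq \sum_{i \in I} d(x_{i}, V^{*})^{p} + \eps \sum_{i=1}^{n} d(x_{i}, \linspan{S_{0}})^{p}$.

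For the inductive step, I would, for each round $t \in [T]$, invoke Theorem \ref{thm:adaptive-sub} with $S_{0} \cup \dotsc \cup S_{t-1}$ as the initial subset and a fresh adaptive sample $S_{t}$ of $\tilde{O}(p^{2}k^{2}/\eps^{2})$ points drawn with probability proportional to $d(x_{i}, \linspan{S_{0} \cup \dotsc \cup S_{t-1}})^{p}$, yielding with constant probability
\[
\sum_{i \in I} d(x_{i}, V_{S_{0} \cup \dotsc \cup S_{t}})^{p} \leq \sum_{i \in I} d(x_{i}, V^{*})^{p} + \eps \sum_{i=1}^{n} d(x_{i}, \linspan{S_{0} \cup \dotsc \cup S_{t-1}})^{p}.
\]
To make all $T$ rounds succeed simultaneously, I would boost each round's constant success probability to $1 - 1/(2T)$ by drawing $O(\log T)$ independent batches and letting $S_{t}$ be their union; with probability $1 - 1/(2T)$ at least one batch succeeds, and since the best $k$-dimensional subspace in the larger span $\linspan{S_{0} \cup \dotsc \cup S_{t}}$ can only be better, the round-$t$ bound holds for the union. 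This is the source of the extra $\log T$ factor in $\size{S} = \tilde{O}(p^{2}k^{2}T \log T/\eps^{2})$, and a union bound over the $T$ rounds then guarantees that all of them hold with probability at least $1/2$.

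Conditioned on this event, I would close the recursion exactly as in Theorem \ref{thm:adaptive-line-T}. The key monotonicity observation is that $V_{S_{0} \cup \dotsc \cup S_{t-1}}$ is a $k$-dimensional subspace sitting inside $\linspan{S_{0} \cup \dotsc \cup S_{t-1}}$, so $d(x_{i}, \linspan{S_{0} \cup \dotsc \cup S_{t-1}})^{p} \leq d(x_{i}, V_{S_{0} \cup \dotsc \cup S_{t-1}})^{p}$ for every $i$. Feeding the round-$t$ bound into the round-$(t-1)$ bound repeatedly produces the geometric series $(1 + \eps + \eps^{2} + \dotsb) \leq 1/(1-\eps) \leq 1 + 2\eps$ in front of $\sum_{i \in I} d(x_{i}, V^{*})^{p}$ together with a residual $\eps^{T} \sum_{i=1}^{n} d(x_{i}, \linspan{S_{0}})^{p}$; running the whole argument with $\eps/2$ in place of $\eps$ absorbs the constant and gives the claimed $(1+\eps)$ factor.

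The step I expect to be the main obstacle is the bookkeeping needed to fold the all-points residual term back into the recursion. The single-round guarantee controls the inlier error $\sum_{i \in I}$ on its left-hand side but carries an additive error summed over all $n$ points on its right-hand side, so, just as in the line case, one must track the all-points residual $\sum_{i=1}^{n} d(x_{i}, \linspan{S_{0} \cup \dotsc \cup S_{t}})^{p}$ as a separate decreasing sequence rather than the inlier error, and verify that the monotonicity step and the adaptive-sampling guarantee together drive it down by a factor of $\eps$ at each round so the telescoping actually closes. Some care is also needed to confirm that the projection-orthogonal-to-$\linspan{S_{0} \cup \dotsc \cup S_{t-1}}$ reduction underlying Theorem \ref{thm:adaptive-sub} applies verbatim with the same fixed target $V^{*}$ and the same inlier set $I$ across all rounds.
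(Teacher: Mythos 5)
Your proposal is correct and takes essentially the same route as the paper: the paper's proof of Theorem~\ref{thm:adaptive-sub-T} is literally ``induction on $T$ using Theorem~\ref{thm:adaptive-sub},'' fleshed out exactly as in the proof of Theorem~\ref{thm:adaptive-line-T} --- the same per-round invocation with initial subset $S_{0} \cup \dotsc \cup S_{t-1}$, the same $O(\log T)$ boosting and union bound (which is where the $\log T$ in $\size{S}$ comes from), the same monotonicity step $d(x_{i}, \linspan{S_{0} \cup \dotsc \cup S_{t-1}}) \leq d(x_{i}, V_{S_{0} \cup \dotsc \cup S_{t-1}})$, and the same geometric-series telescoping with the final $\eps/2$ rescaling. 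The ``obstacle'' you flag --- that the per-round guarantee bounds only the inlier sum while the residual fed into the next round is a sum over all $n$ points --- is genuine, but the paper's own line-case proof handles it with exactly the monotonicity chain you describe and no additional care, so your write-up is as complete as the paper's.
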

\begin{proof}
By induction on the number of rounds $T$ and using Theorem \ref{thm:adaptive-sub}.
\end{proof} 

Now assume that the optimal $\ell_{p}$ error of $V^{*}$ over the optimal inliers $I$ is at least $\delta$ times its error over the entire data, that is, $\sum_{i \in I} d(x_{i}, V^{*})^{p} \geq \delta~ \sum_{i=1}^{n} d(x_{i}, V^{*})^{p}$. In that case, we can show a stronger multiplicative $(1+\eps)$-approximation instead of additive one. This can be thought of as a \emph{weak coreset} extending the previous work on clustering given data using points and lines \cite{FeldmanS12}.% A proof of the following theorem is deferred to the Appendix. 

\begin{theorem} \label{thm:multi-sub}
For any given points $x_{1}, x_{2}, \dotsc, x_{n} \in \R^{d}$, let $I \subseteq [n]$ be the set of optimal $(1-\alpha)n$ inliers and $V^{*}$ be the optimal $k$-dimensional linear subspace that minimizes their $\ell_{p}$ error, for $1 \leq p < \infty$. Suppose $\sum_{i \in I} d(x_{i}, V^{*})^{p} \geq \delta~ \sum_{i=1}^{n} d(x_{i}, V^{*})^{p}$. Then, for any $0 < \eps < 1$, we can efficiently find a subset $S$ of size $\size{S} = \tilde{O}\left((p^{2}k^{2}/\eps^{2})~ \log(1/\delta)~ \log\log(1/\delta)\right)$ that contains a $k$-dimensional linear subspace $V_{S}$ such that
\[
\sum_{i \in I} d(x_{i}, V_{S})^{p} \leq (1+\eps)~ \sum_{i \in I} d(x_{i}, V^{*})^{p}, \quad \text{with a constant probability}.
\]
% with a constant probability.
\end{theorem}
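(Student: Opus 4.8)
The plan is to follow the same two-phase strategy as the warm-up proof of Theorem~\ref{thm:multi-line}: first obtain a cheap initial subset $S_0$ whose span already gives a constant-factor (multiplicative) approximation to the optimal $\ell_p$ error over \emph{all} $n$ points, and then run the adaptive additive sampling of Theorem~\ref{thm:adaptive-sub-T} for $T = O(\log(1/\delta))$ rounds on top of $S_0$ to drive the approximation down to $(1+\eps)$ relative to the inlier error. For the first phase I would invoke the bi-criteria sampling result for $\ell_p$ subspace approximation \emph{without} outliers from~\cite{DeshpandeV07}, which produces a subset $S_0$ of size $\tilde{O}(p^2 k^2/\eps^2)$ such that
\[
\sum_{i=1}^{n} d(x_i, \linspan{S_0})^p \leq C \sum_{i=1}^{n} d(x_i, V_{\mathrm{all}}^*)^p,
\]
where $V_{\mathrm{all}}^*$ is the optimal $k$-dimensional subspace for the full (no-outlier) instance and $C = O(1)$. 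This is the exact analogue of the single length-squared sample used in Theorem~\ref{thm:multi-line}, where Frieze et al.~\cite{FriezeKV98} supplied the corresponding $2$-approximation for the $k=1$, $p=2$ case.

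Next I would chain this bound with the hypothesis. Since $V_{\mathrm{all}}^*$ is optimal over all points, $\sum_{i=1}^{n} d(x_i, V_{\mathrm{all}}^*)^p \leq \sum_{i=1}^{n} d(x_i, V^*)^p$, and the assumption $\sum_{i \in I} d(x_i, V^*)^p \geq \delta \sum_{i=1}^n d(x_i, V^*)^p$ then gives
\[
\sum_{i=1}^{n} d(x_i, \linspan{S_0})^p \leq \frac{C}{\delta} \sum_{i \in I} d(x_i, V^*)^p.
\]
Feeding this $S_0$ into Theorem~\ref{thm:adaptive-sub-T} with $T$ adaptive rounds yields, with constant probability, a set $S = S_0 \cup S_1 \cup \dotsc \cup S_T$ whose span contains a $k$-dimensional $V_S$ with
\[
\sum_{i \in I} d(x_i, V_S)^p \leq (1+\eps) \sum_{i \in I} d(x_i, V^*)^p + \eps^T\, \frac{C}{\delta} \sum_{i \in I} d(x_i, V^*)^p.
\]
Choosing $T = O(\log(1/\delta))$, with the hidden constant large enough that $\eps^T C/\delta \leq \eps$, collapses the residual term, leaving $\sum_{i \in I} d(x_i, V_S)^p \leq (1 + 2\eps) \sum_{i \in I} d(x_i, V^*)^p$; rescaling $\eps \to \eps/2$ gives the claimed $(1+\eps)$ bound. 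The size of $S$ is, by Theorem~\ref{thm:adaptive-sub-T}, $\tilde{O}\big((p^2 k^2 T \log T)/\eps^2\big) = \tilde{O}\big((p^2 k^2/\eps^2)\, \log(1/\delta)\, \log\log(1/\delta)\big)$, as required. The constant success probabilities of the two phases are combined in the usual way by independent repetition and taking the best candidate.

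The step I expect to be the main obstacle is the first phase, namely producing $S_0$ with a genuine \emph{multiplicative} constant-factor guarantee for the full point set. One cannot get this from our own additive machinery alone: applying Theorem~\ref{thm:adaptive-sub-T} with $S_0 = \emptyset$ leaves a residual of $\eps^T \sum_{i=1}^n \norm{x_i}^p$, and the ratio $\sum_{i=1}^n \norm{x_i}^p / \sum_{i=1}^n d(x_i, V_{\mathrm{all}}^*)^p$ is unbounded in general, so no a priori bounded choice of $T$ suffices. This is exactly why the no-outlier multiplicative approximation of~\cite{DeshpandeV07} (whose adaptive analysis shrinks the residual by a constant factor relative to the \emph{current} error each round, independent of the starting ratio) is needed as a black box. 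Once $S_0$ is in hand, the remainder is the same bookkeeping and union bound over the $T$ rounds already carried out in Theorems~\ref{thm:adaptive-sub-T} and~\ref{thm:multi-line}.
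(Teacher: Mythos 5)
Your proposal is correct and has the same skeleton as the paper's proof --- an initial subset $S_{0}$ with a coarse multiplicative guarantee for the no-outlier instance, the chain of inequalities through the assumption $\sum_{i \in I} d(x_{i}, V^{*})^{p} \geq \delta \sum_{i=1}^{n} d(x_{i}, V^{*})^{p}$, and then Theorem \ref{thm:adaptive-sub-T} to drive the factor down to $(1+\eps)$ --- but it differs in the choice of initializer, and the difference matters for the bookkeeping. The paper initializes with volume sampling (Theorem 3 of \cite{DeshpandeV07}): only $\size{S_{0}} = k$ points, but a factor of $2^{O(pk \log k)}$, which forces $T = O\left(pk \log k \, \log(1/\delta) \log(1/\eps)\right)$ adaptive rounds and leads to the paper's own final count of $\tilde{O}\left((p^{3}k^{3}/\eps^{2}) \log(1/\delta)\right)$ points. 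You instead initialize with the constant-factor bi-criteria approximation of \cite{DeshpandeV07} for the full point set, paying more points once but only an $O(1)$ factor, so that $T = O(\log(1/\delta))$ rounds suffice; this keeps the $\text{poly}(pk)$ cost of initialization out of the $\log(1/\delta)$ multiplier, and in fact matches the size bound claimed in the theorem statement more faithfully than the paper's own proof does. One caveat: your quoted size $\tilde{O}(p^{2}k^{2}/\eps^{2})$ for the constant-factor bi-criteria subset is optimistic --- the paper's introduction quotes $(k/\eps)^{O(p)}$ for the bi-criteria result of \cite{DeshpandeV07}, and unrolling that construction (volume sampling followed by $O(pk\log k)$ adaptive rounds at constant accuracy) costs roughly $\tilde{O}(p^{3}k^{3})$ points --- but this is harmless for correctness, since it enters the final bound only as a $\delta$-independent additive term rather than multiplying $\log(1/\delta)$. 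Finally, your diagnosis of why a genuinely multiplicative initializer is unavoidable (the ratio $\sum_{i=1}^{n} \norm{x_{i}}^{p} / \sum_{i=1}^{n} d(x_{i}, V)^{p}$, with $V$ the all-points optimum, is unbounded, so no fixed number of rounds of the additive machinery can start from $S_{0} = \emptyset$) is precisely the role that volume sampling plays in the paper's proof, and that the Frieze--Kannan--Vempala sample plays in the $k=1$ warm-up of Theorem \ref{thm:multi-line}; you and the paper agree on this point, just with different black boxes.
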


%%% PROOF IS MOVED TO APPENDIX %%%
 
\begin{proof}
We know from Theorem 3 of \cite{DeshpandeV07} that using approximate volume sampling, we can efficiently find a subset $S_{0}$ of size $\size{S_{0}} = k$ such that $V_{S_{0}} = \linspan{S_{0}}$ gives a multiplicative $2^{O(pk \log k)}$ to the optimal subspace $V$ that minimizes the $\ell_{p}$ error over all the points $x_{1}, x_{2}, \dotsc, x_{n}$, that is, 
$
\sum_{i=1} d(x_{i}, V_{S_{0}})^{p} \leq 2^{O(pk \log k)}~ \sum_{i=1}^{n} d(x_{i}, V)^{p}.
$
Therefore, in the presence of outliers, using the above assumption about inlier vs. total error, $V_{S_{0}} = \linspan{S_{0}}$ gives a $2^{O(pk \log k)} \cdot 1/\delta$ multiplicative approximation to the $\ell_{p}$ subspace approximation problem with outliers as follows. 
% \begin{align*}
% \sum_{i=1}^{n} d(x_{i}, V_{S_{0}})^{p} & \leq 2^{O(pk \log k)}~ \sum_{i=1}^{n} d(x_{i}, V)^{p} \\
% & \leq 2^{O(pk \log k)}~ \sum_{i=1}^{n} d(x_{i}, V^{*})^{p} \\
% & \leq \frac{2^{O(pk \log k)}}{\delta}~ \sum_{i \in I} d(x_{i}, V^{*})^{p}.
% \end{align*}
\begin{align*}
\sum_{i=1}^{n} d(x_{i}, V_{S_{0}})^{p} & \leq 2^{O(pk \log k)}~ \sum_{i=1}^{n} d(x_{i}, V)^{p}\\ 
&\leq 2^{O(pk \log k)}~ \sum_{i=1}^{n} d(x_{i}, V^{*})^{p} \\
& \leq \frac{2^{O(pk \log k)}}{\delta}~ \sum_{i \in I} d(x_{i}, V^{*})^{p}.
\end{align*}
Now we can plug this into Theorem \ref{thm:adaptive-sub-T} using $T = O\left(pk \log k \log (1/\delta) \log (1/\eps)\right)$ adaptive rounds to reduce the multiplicative approximation factor from $(1+\eps)/\delta$ down to a multiplicative $(1+\eps)$-approximation when compared to $\sum_{i \in I} d(x_{i}, V^{*})^{p}$. Putting it all together, we get a subset $S$ of size $\size{S} = \tilde{O}\left(p^{3}k^{3}/\eps^{2} \cdot \log(1/\delta)\right)$ whose span contains a $k$-dimensional linear subspace $V_{S}$ such that
$$
\sum_{i \in I} d(x_{i}, V_{S})^{p} \leq (1+\eps)~ \sum_{i \in I} d(x_{i}, V^{*})^{p}.
$$
\end{proof}
 
 \section{M-estimator subspace approximation with outliers}
$\ell_{p}$ error or loss function is a special case of M-estimators used in statistics. General M-estimators as loss functions for subspace approximation or clustering have been previously studied in \cite{FeldmanS12} for point and line median clustering and in \cite{ClarksonW15} for robust regression. One way to define robust variants of the subspace approximation problem is to use more general loss functions that are more resilient to outliers. Here are a few examples of popular M-estimators.
\begin{itemize}
\item Huber's loss function with threshold parameter $t$
\[
L(x) = \begin{cases} x^{2}/2, & \text{if $\abs{x} < t$} \\ t \abs{x} - t^{2}/2, & \text{if $\abs{x} \geq t$}. \end{cases}
\]
\item Tukey's biweight or bisquare loss function with threshold parameter $t$
\[
L(x) = \begin{cases} \left(t^{6} - (t^{2} - x^{2})^{3}\right)/6, & \text{if $\abs{x} < t$} \\ 0, & \text{if $\abs{x} \geq t$}. \end{cases}
\]
% \item Andrew's loss function
% \[
% L(x) = \begin{cases} 1 - \cos(x), & \text{if $\abs{x} \leq \pi$} \\ 0, & \text{if $\abs{x} > \pi$}. \end{cases}
% \]
\end{itemize}
The advantage of more general loss functions such as Huber loss is that they approximate squared-error for the nearer points but approximate $\ell_{1}$-error for faraway points. They combine the smoothness of squared-error with the robustness of $\ell_{1}$-error.

Clarkson and Woodruff \cite{ClarksonW15} study the M-estimator variant of subspace approximation defined as follows. Given points $x_{1}, x_{2}, \dotsc, x_{n} \in \R^{d}$, an integer $1 \leq k \leq d$, and an M-estimator loss function $M: \R \rightarrow \R$, find a $k$-dimensional linear subspace $V$ that minimizes
$
\sum_{i=1}^{n} M\left(d(x_{i}, V)\right).
$
Clarkson and Woodruff \cite{ClarksonW15} show that the adaptive sampling for angle-drop lemma used by \cite{DeshpandeV07} to go from a large multiplicative approximation down to $(1+\eps)$-approximation can also be achieved by a non-adaptive \emph{residual} sampling. Here we restate Theorem 45 from \cite{ClarksonW15} using our notation of subspaces and distances instead of matrix norms.
\begin{theorem}{(Theorem 45 of \cite{ClarksonW15})}
Given $x_{1}, x_{2}, \dotsc, x_{n} \in \R^{d}$, an integer $1 \leq k \leq d$, and an M-estimator loss function $M(\cdot)$, let $V_{0}$ be any linear subspace such that $\sum_{i=1}^{n} M(d(x_{i}, V_{0})) \leq C~ \sum_{i=1}^{n} M(d(x_{i}, V))$, where $V$ is the $k$-dimensional linear subspace that minimizes the M-estimator error for its distances to $x_{1},\dotsc, x_{n}$ summed over all the points. Let $S \subseteq [n]$ be a sample of points, where each $i$ gets picked independently with probability $\min\{1, C' \cdot M(d(x_{i}, V_{0}))/\sum_{i=1}^{n} M(d(x_{i}, V_{0})\}$, for some constant $C' = O\left(C k^{3}/\eps^{2}~ \log(k/\eps)\right)$. Then, with a constant probability, we have
\begin{itemize}
    \item $\sum_{i=1}^{n} M(d(x_{i}, \linspan{V_{0} \cup S})) \leq (1+\eps)~ \sum_{i=1}^{n} M(d(x_{i}, V))$, and
    \item $\size{S} = O\left(C k^{3}/\eps^{2}~ \log(k/\eps)\right)$.
\end{itemize}
%\[
%\sum_{i=1}^{n} M(d(x_{i}, \linspan{V_{0} \cup S})) \leq (1+\eps)~ \sum_{i=1}^{n} %M(d(x_{i}, V)), \text{and~} \size{S} = O\left(C k^{3}/\eps^{2}~ %\log(k/\eps)\right).
%\]
% and $\size{S} = O\left(C k^{3}/\eps^{2}~ \log(k/\eps)\right)$.
\end{theorem}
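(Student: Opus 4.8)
The plan is to show that a single round of non-adaptive residual sampling against the fixed crude subspace $V_{0}$ simulates the entire adaptive angle-drop process that the earlier theorems of this paper (following \cite{DeshpandeV07}) run against a sequence of improving subspaces. In the adaptive version each round samples proportional to the current residual and invokes the angle-drop lemma: a point whose loss to the current subspace exceeds $(1+\eps/2)$ times its loss to the optimum reduces the angle to $V$ by a multiplicative factor. I would argue that, up to the factor $C$, the residuals $M(d(x_{i}, V_{0}))$ dominate the residuals against \emph{every} intermediate subspace the adaptive process would visit, so that a single sample that is a factor $C$ larger contains, with good probability, enough such \emph{bad} points to drive all the required angle drops at once. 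Bounding the loss of $\linspan{V_{0} \cup S}$ by that of any subspace it contains then yields the first conclusion.

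First I would isolate the regularity properties of $M$ that the argument relies on: $M$ non-decreasing, $M(0)=0$, at-least-linear and at-most-quadratic growth, and approximate subadditivity $M(a+b) \le K(M(a)+M(b))$ for an absolute constant $K$. Subadditivity yields an approximate triangle inequality for the induced losses, $M(d(x_{i},V')) \le K(M(d(x_{i},V_{0})) + M(d(\Pi_{V_{0}}x_{i},V')))$, which plays two roles: it transfers the angle-drop lemma of \cite{ShyamalkumarV12,DeshpandeV07} from the homogeneous $\ell_{p}$ setting to a general $M$, and it bounds the cost of the rotated witness subspace in terms of $V_{0}$'s residual mass $R_{0} = \sum_{j} M(d(x_{j}, V_{0})) \le C \cdot \opt$.

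The core step is a capture argument. Since $V$ is $k$-dimensional, its improvement over $V_{0}$ is carried by at most $k$ directions orthogonal to $\linspan{V_{0}}$; I would show that any such direction carrying an $\Omega(\eps/k)$ share of the residual gap is, with good probability, aligned with the residual component of some sampled point, so that $\linspan{V_{0} \cup S}$ contains a subspace $V'$ whose angle to $V$ is small along every significant direction. The factor $C$ enters because $V_{0}$-residuals approximate the ideal residuals only within $C$; covering all $k$ directions and boosting the per-direction success probability contribute the remaining $k$ and $\log(k/\eps)$ factors, giving $\size{S} = O(C k^{3}/\eps^{2} \log(k/\eps))$. Applying the approximate triangle inequality to $V'$ then bounds the loss of $\linspan{V_{0} \cup S}$ by $(1+\eps)\,\opt$.

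The main obstacle lies in two steps. The first is extending the angle-drop lemma to non-homogeneous losses: Huber and Tukey are neither scale-invariant nor homogeneous, so I must use the growth conditions to argue that a \emph{bad} point still produces a multiplicative angle improvement, with constants degrading gracefully. The second, and the real crux, is establishing the domination/capture claim \emph{uniformly} over all intermediate subspaces at once; I would do this either by a sensitivity bound feeding a VC-type uniform-convergence argument over the sublevel sets $\{x : M(d(x,V')) \le \tau\}$, whose pseudo-dimension I would control by sandwiching each sublevel set between quadratic regions, or by a direct union bound over an $\eps$-net of $k$-dimensional directions. This uniform step is exactly what forces the $C$ factor and the cubic dependence on $k$ in the sample size.
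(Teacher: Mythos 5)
You should first be aware that the paper contains no proof of this statement: it is Theorem 45 of \cite{ClarksonW15}, restated in the paper's notation, and the surrounding text only remarks that its proof is based on the angle-drop lemma. So your attempt can only be measured against the known Clarkson--Woodruff argument, not against anything internal to this paper. Your opening paragraph does identify the correct strategy and, roughly, the correct role of $C$: while a (virtual) adaptive angle-drop process has not yet reached a $(1+\eps)$-approximation, its current total residual is at least $\opt$, whereas the total $V_{0}$-residual is at most $C\cdot\opt$; combined with the pointwise monotonicity $M(d(x_{i}, V_{t})) \leq M(d(x_{i}, V_{0}))$ for any intermediate $V_{t} \supseteq V_{0}$, this shows the inflated $V_{0}$-based inclusion probabilities dominate each round's adaptive ones. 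Note that your phrase ``the residuals dominate up to $C$'' conflates these two separate facts, and the lower bound on the current total residual holds only \emph{while the process is not yet done} --- which is exactly why it suffices.

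The genuine gap is in your handling of adaptivity, which you yourself flag as the crux. A VC-type uniform-convergence argument over the sublevel sets $\{x \suchthat M(d(x, V')) \leq \tau\}$, or a union bound over an $\eps$-net of $k$-dimensional subspaces, cannot deliver the claimed bound: these classes are parametrized by subspaces of $\R^{d}$, so their pseudo-dimension (respectively, net size) grows with the ambient dimension --- on the order of $kd$ parameters, nets of size $2^{\Theta(kd)}$ --- and any sample complexity routed through them inherits a dependence on $d$, contradicting $\size{S} = O\left(Ck^{3}/\eps^{2}\, \log(k/\eps)\right)$, which is independent of $n$ and $d$. No uniform statement over intermediate subspaces is actually needed: because each point is included in $S$ \emph{independently}, a Bernoulli sample with inflated rates decomposes as a union of $T$ independent thinner Bernoulli samples, one per round of the virtual adaptive process; round $t$ of the angle-drop argument is then run against the $t$-th sub-sample, whose randomness is untouched by the conditioning created in earlier rounds. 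This per-round fresh-randomness device is what lets a one-shot sample simulate adaptive sampling at the stated size. Separately, your middle ``capture'' paragraph ($k$ directions each carrying an $\Omega(\eps/k)$ share of the residual gap) is a second, different argument that you never reconcile with the angle-drop simulation, and it is unsubstantiated for non-homogeneous losses; the regularity conditions you isolate for $M$ (monotonicity, at-most-quadratic growth) are indeed the right ones, but they should feed the angle-drop route rather than this capture route.
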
 
The advantage of their algorithm is that it does not require multiple passes to do the adaptive sampling, and moreover, it can be combined with an approximate residual score computation to finally get an algorithm that runs in time linear in the number of non-negative coordinates in the input data.
 
For general loss functions or M-estimators, one can define an analogous variant of the subspace approximation problem with outliers as follows. Given points $x_{1}, x_{2}, \dotsc, x_{n} \in \R^{d}$, an integer $1 \leq k \leq d$, a monotone M-estimator loss function $M: \R_{\geq 0} \rightarrow \R_{\geq 0}$, and an outlier parameter $0 \leq \alpha \leq 1$, find a $k$-dimensional linear subspace $V$ that minimizes the sum of M-estimator loss of distances to the nearest $(1-\alpha)n$ points. In other words, let $N_{\alpha}(V) \subseteq [n]$ consist of the indices of the nearest $(1-\alpha)n$ points to $V$ among $x_{1}, x_{2}, \dotsc, x_{n}$. We want to find a $k$-dimensional linear subspace $V$ that minimizes
$
\sum_{i \in N_{\alpha}(V)} M\left(d(x_{i}, V)\right).
$
This variant allows us to control the robustness in two ways: explicitly, using the outlier parameter in the definition, and implicitly, using an appropriate M-estimator loss function of our choice.

We observe that the proof of Theorem 45 in \cite{ClarksonW15} is based on angle-drop lemma and our arguments in Section \ref{sec:subspace} for subspace approximation with outliers go through with very little or no change. Thus, we have the following theorem similar to their dimension reduction for subspace approximation, whose   proof  is similar to the proofs of Theorems 41 and 45 in \cite{ClarksonW15}.

\begin{theorem} \label{thm:m-est}
For any given points $x_{1}, x_{2}, \dotsc, x_{n} \in \R^{d}$, let $I \subseteq [n]$ be the set of optimal $(1-\alpha)n$ inliers and $V^{*}$ be the optimal $k$-dimensional linear subspace that minimizes their M-estimator error $\sum_{i \in I} M(d(x_{i}, V^{*}))$. Suppose $\sum_{i \in I} d(x_{i}, V^{*})^{p} \geq \delta~ \sum_{i=1}^{n} d(x_{i}, V^{*})^{p}$. Then, for any $0 < \eps < 1$, we can efficiently find a subspace $V'$ of dimension $\tilde{O}\left(p^{2}k^{2}/\eps^{2})~ \log(1/\delta) \log\log(1/\delta)\right)$ such that, with a constant probability, it contains a $k$-dimensional linear subspace $\tilde{V}$ satisfying
\[
\sum_{i \in I} d(x_{i}, \tilde{V})^{p} \leq (1+\eps)~ \sum_{i \in I} d(x_{i}, V^{*})^{p}.
\]
\end{theorem}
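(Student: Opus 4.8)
The plan is to observe that Theorem \ref{thm:m-est} is not genuinely a new statement about M-estimators: both its hypothesis $\sum_{i \in I} d(x_{i}, V^{*})^{p} \geq \delta \sum_{i=1}^{n} d(x_{i}, V^{*})^{p}$ and its conclusion are phrased purely in $\ell_{p}$ distances, and the M-estimator enters only through the choice of the pair $(V^{*}, I)$ that we target. So the strategy is to rerun the adaptive-sampling machinery of Section \ref{sec:subspace}, now aimed at the M-estimator optimizers rather than the $\ell_{p}$ optimizers, after verifying that nothing in that machinery ever used $\ell_{p}$-optimality of $(V^{*}, I)$.

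First I would fix $V^{*}$ to be the M-estimator-optimal $k$-dimensional subspace and $I = N_{\alpha}(V^{*})$ its inliers, and trace the chain Lemma \ref{lemma:bad-sub} $\to$ Theorem \ref{thm:additive-sub-line} $\to$ \ref{thm:adaptive-sub-line} $\to$ \ref{thm:additive-sub-sub} $\to$ \ref{thm:adaptive-sub} $\to$ \ref{thm:adaptive-sub-T} $\to$ \ref{thm:multi-sub}, checking that each treats $V^{*}$ as an \emph{arbitrary} fixed subspace and $I$ as an arbitrary fixed index set. The bad-set bound of Lemma \ref{lemma:bad-sub} is purely algebraic (it uses only $d(x_{i}, W_{S})^{p} \leq \norm{x_{i}}^{p}$ for the subspace $W_{S}$ together with the definition of $B(S) \subseteq I$), and the angle-drop step is geometric; neither invokes any optimality of $V^{*}$ over $I$. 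Hence the only facts about $(V^{*}, I)$ that the downstream argument consumes are the $\ell_{p}$ assumption and the availability of a good starting subspace.

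Next I would build the starting subspace exactly as in the proof of Theorem \ref{thm:multi-sub}: approximate volume sampling (Theorem 3 of \cite{DeshpandeV07}) yields $S_{0}$ of size $k$ with $\sum_{i=1}^{n} d(x_{i}, \linspan{S_{0}})^{p} \leq 2^{O(pk \log k)} \sum_{i=1}^{n} d(x_{i}, V)^{p}$, where $V$ is the $\ell_{p}$-optimal subspace over all $n$ points. Since $V$ minimizes the total $\ell_{p}$ error, $\sum_{i=1}^{n} d(x_{i}, V)^{p} \leq \sum_{i=1}^{n} d(x_{i}, V^{*})^{p}$ \emph{regardless of how $V^{*}$ was selected}, and the hypothesis gives $\sum_{i=1}^{n} d(x_{i}, V^{*})^{p} \leq \tfrac{1}{\delta} \sum_{i \in I} d(x_{i}, V^{*})^{p}$; chaining these three bounds makes $\linspan{S_{0}}$ a $2^{O(pk \log k)}/\delta$ multiplicative $\ell_{p}$-approximation to $\sum_{i \in I} d(x_{i}, V^{*})^{p}$. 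I would then feed this into Theorem \ref{thm:adaptive-sub-T} with $T = O(pk \log k \log(1/\delta) \log(1/\eps))$ adaptive rounds to drive the factor down to $(1+\eps)$; the union of sampled sets has size $\tilde{O}((p^{2}k^{2}/\eps^{2}) \log(1/\delta) \log\log(1/\delta))$, so $V' = \linspan{S}$ has dimension within the claimed bound and contains a $k$-dimensional $\tilde{V}$ with $\sum_{i \in I} d(x_{i}, \tilde{V})^{p} \leq (1+\eps) \sum_{i \in I} d(x_{i}, V^{*})^{p}$. As a one-pass alternative matching \cite{ClarksonW15}, I would instead draw a single batch of residual samples with probabilities proportional to $d(x_{i}, \linspan{S_{0}})^{p}$ following Theorem 45 of \cite{ClarksonW15}, certifying the same guarantee over $I$ via Lemma \ref{lemma:bad-sub}.

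The main obstacle is conceptual rather than computational: recognizing that the M-estimator loss is a red herring for \emph{this} guarantee, so that the theorem collapses to Theorem \ref{thm:multi-sub} applied to a different target pair $(V^{*}, I)$. Concretely, the single point to check carefully is that the proofs of Lemma \ref{lemma:bad-sub} and the angle-drop step are truly stated for an arbitrary fixed $V^{*}$ and never silently use that $V^{*}$ minimizes an objective over $I$; once confirmed, substituting the M-estimator optimizers for the $\ell_{p}$ optimizers leaves the argument unchanged, and the $\ell_{p}$ assumption supplies everything the starting-subspace bound requires.
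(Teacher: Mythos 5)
Your proposal is correct for the theorem as literally stated, and it uses essentially the same machinery as the paper, but your framing is sharper than what the paper offers. The paper's entire ``proof'' is a remark that the angle-drop-based arguments of Section \ref{sec:subspace} go through with little or no change, together with a pointer to the residual-sampling Theorems 41 and 45 of \cite{ClarksonW15}. Your primary route---observing that the hypothesis and conclusion involve only $\ell_{p}$ distances, so the M-estimator enters solely through the selection of the fixed pair $(V^{*}, I)$, and then checking that nothing in the chain from Lemma \ref{lemma:bad-sub} through Theorem \ref{thm:multi-sub} uses $\ell_{p}$-optimality of that pair (only the $\delta$-assumption and the fact that the all-points $\ell_{p}$ optimum $V$ beats \emph{any} competitor, including this $V^{*}$)---turns the paper's hand-wave into an actual reduction: the theorem collapses to Theorem \ref{thm:multi-sub} run against an arbitrary fixed target pair. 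Your secondary one-pass route via Theorem 45 of \cite{ClarksonW15} is what the paper's citation gestures at. One caveat is worth recording: the mismatch you exploit (M-estimator loss in the definition of $(V^{*}, I)$, but $\ell_{p}$ powers in the hypothesis and conclusion) is very likely a transcription artifact in the paper, and the statement the authors appear to intend---with $M(d(x_{i},\cdot))$ throughout---would \emph{not} follow from your reduction; it would require an M-estimator analogue of the angle-drop lemma and an M-estimator starting solution, which is precisely what the Clarkson--Woodruff machinery is cited to supply. So your proof settles the theorem as written, while the paper's (sketchy) argument is aimed at the stronger statement it probably meant to make.
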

% \begin{proof}
% Similar to the proofs of Theorems 41 and 45 in \cite{ClarksonW15}.
% \end{proof}
 
\section{Affine subspace approximation with outliers} 
Given an input data set in a high-dimensional space, \emph{affine subspace approximation} asks for an affine subspace that best fits this data. For squared-error or $\ell_{2}$ subspace approximation, it is easy to see that the best such subspace must pass through the mean of the input data. An easy way to see this is using the parallel axis theorem.

\begin{prop} \label{prop:parallel}
Given any points $x_{1}, x_{2}, \dotsc, x_{n} \in \R^{d}$ and an affine subspace $V$, let $V_{\mu}$ be the parallel translate of $V$ that passes through the mean $\mu = \sum_{i=1}^{n} x_{i}/n$. Then
\[
\sum_{i=1}^{n} d(x_{i}, V)^{2} = \sum_{i=1}^{n} d(x_{i}, V_{\mu})^{2} + n~ d(V, V_{\mu})^{2},
\]
where $d(V, V_{\mu}) = \min\{d(x, y) \suchthat x \in V~ \text{and}~ y \in V_{\mu}\}$.
\end{prop}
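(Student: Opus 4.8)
The plan is to reduce everything to orthogonal projections onto the linear part of $V$ and then exploit the defining property of the mean to annihilate a cross term.

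First I would write $V$ in the form $V = v_{0} + W$, where $W$ is the linear subspace parallel to $V$ and $v_{0}$ is any fixed point of $V$; then the translate through the mean is $V_{\mu} = \mu + W$. Let $Q$ denote orthogonal projection onto the orthogonal complement $W^{\perp}$. The elementary fact I would invoke is that, for any point $x$ and any $p \in V$, the distance to the affine subspace is $d(x, V) = \norm{Q(x - p)}$, since the nearest point of $V$ to $x$ differs from $x$ by exactly the $W^{\perp}$-component of $x - p$. In particular $d(x_{i}, V)^{2} = \sqd{Q(x_{i} - v_{0})}$ and $d(x_{i}, V_{\mu})^{2} = \sqd{Q(x_{i} - \mu)}$, since $\mu \in V_{\mu}$.

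Next I would split $x_{i} - v_{0} = (x_{i} - \mu) + (\mu - v_{0})$, apply the linear map $Q$, and expand the squared norm:
\[
\sqd{Q(x_{i} - v_{0})} = \sqd{Q(x_{i} - \mu)} + 2 \inner{Q(x_{i} - \mu)}{Q(\mu - v_{0})} + \sqd{Q(\mu - v_{0})}.
\]
Summing over $i = 1, \dotsc, n$, the middle term becomes $2 \inner{\sum_{i=1}^{n} Q(x_{i} - \mu)}{Q(\mu - v_{0})}$, and here is the crux: by linearity of $Q$ and the definition $\mu = \sum_{i=1}^{n} x_{i}/n$ we have $\sum_{i=1}^{n} Q(x_{i} - \mu) = Q\left(\sum_{i=1}^{n} x_{i} - n\mu\right) = Q(0) = 0$, so the cross term vanishes. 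This is the heart of the argument and the only place where the choice of $\mu$ is used; everything else is bookkeeping.

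Finally, the leftover term is $n \sqd{Q(\mu - v_{0})}$, which I would identify with $n \, d(V, V_{\mu})^{2}$. Since $V$ and $V_{\mu}$ share the same linear part $W$, a point of $V$ minus a point of $V_{\mu}$ ranges over $(v_{0} - \mu) + W$, so $d(V, V_{\mu}) = \min_{u \in W} \norm{(v_{0} - \mu) + u}$, which is exactly the distance from $v_{0} - \mu$ to $W$, namely $\norm{Q(v_{0} - \mu)} = \norm{Q(\mu - v_{0})}$. Combining the three observations yields the claimed identity. I do not anticipate any real obstacle beyond carefully verifying the point-to-affine-subspace distance formula and the formula for the distance between two parallel affine subspaces.
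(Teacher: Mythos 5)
Your proof is correct. Note that the paper itself gives no proof of Proposition \ref{prop:parallel} — it simply invokes the statement as the classical parallel axis theorem — so there is no in-paper argument to compare against; your derivation is the standard one and fills that gap cleanly. All three ingredients check out: the formula $d(x,V) = \norm{Q(x-p)}$ for the orthogonal projection $Q$ onto $W^{\perp}$ and any $p \in V$, the vanishing of the cross term via $\sum_{i=1}^{n}(x_{i}-\mu) = 0$ (the only place the choice of $\mu$ enters, as you say), and the identification $d(V, V_{\mu}) = \min_{u \in W}\norm{(v_{0}-\mu)+u} = \norm{Q(\mu - v_{0})}$ for parallel affine subspaces sharing the linear part $W$.
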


Consider the problem of affine subspace approximation with the squared error in the presence of outliers as follows. Given points $x_{1}, x_{2}, \dotsc, x_{n} \in \R^{d}$, an integer $1 \leq k \leq d$, and an outlier parameter $0 \leq \alpha \leq 1$, find a $k$-dimensional affine subspace $V$ that minimizes the sum of squared distances of the $(1-\alpha)n$ points nearest to it. In other words, let $N_{\alpha}(V) \subseteq [n]$ consist of the indices of the nearest $(1-\alpha)n$ points to $V$ among $x_{1}, x_{2}, \dotsc, x_{n}$. We want to minimize
\[
\sum_{i \in N_{\alpha}(V)} d(x_{i}, V)^{2}.
\]
In other words, the subspace approximation problem with outliers, for a given outlier parameter $0 \leq \alpha \leq 1$, is to consider all partitions of $x_{1}, x_{2}, \dotsc, x_{n}$ into $(1-\alpha)n$ inliers and the remaining $\alpha n$ outliers, and find the affine subspace with the least squared error for the inliers over all such partitions.

Let $V^{*}$ be the optimal solution to the above problem, and let $I = N_{\alpha}(V^{*})$ be the optimal set of inliers. By Proposition \ref{prop:parallel}, $V^{*}$ must pass through the mean of the inliers, that is, through $\mu^{*} = \sum_{i \in I} x_{i}/(1-\alpha)n$. If we could sample points from $I$ uniformly at random, then $\mu^{*}$ can be well-approximated by the empirical mean of a small sample. Since we do not know $I$ but know that $\size{I} = (1-\alpha)n$, we can use a trick that is often used in $k$-means clustering and related problems \cite{KumarSS2004}. We can pick a small, uniformly random sample of points from $[n]$, then go over all its partitions into two parts by brute force, and go over the empirical means of the two parts for each partition. One of these partitions will correspond to the correct inlier-outlier partition of our sample. In that case, the mean of the inlier part behaves like the empirical mean of a uniformly random sample of inliers.

We first state a lemma, which is implicit in Theorem 2 from a paper of Barman \cite{Barman2015} on approximate Caratheodory's theorem. We reproduce its short proof for completeness.
\begin{lemma} \label{lemma:sample-mean}
Let $S$ be a i.i.d. random sample of $2/\eta^{2}$ points from $I$ picked uniformly. Let $\mu_{S}$ be the mean of the sample $S$ and $\mu_{I}$ be the mean of all the points in $I$. Then
\[
\prob{\norm{\mu_{S} - \mu_{I}} \leq \eta D} \geq 1/2.
\]
%\[
%\prob{\norm{\frac{1}{\size{S}} \sum_{i \in S} x_{i} - \frac{1}{\size{I}} \sum_{i \in I} x_{i}} \leq \delta D} \geq 1/2,
%\]
where $D$ is the diameter of the set $\{x_{i} \suchthat i \in I\}$.
\end{lemma}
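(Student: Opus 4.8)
The plan is to view $S$ as i.i.d.\ draws $y_1, \dots, y_m$ with $m = 2/\eta^2$, each distributed uniformly over $\{x_i \suchthat i \in I\}$, so that $\mu_S = \frac{1}{m}\sum_{j=1}^m y_j$ and $\expec{y_j} = \mu_I$. Thus $\mu_S$ is an unbiased estimator of $\mu_I$, and the whole statement reduces to controlling the second moment $\expec{\sqd{\mu_S - \mu_I}}$ and then invoking Markov's inequality on the nonnegative random variable $\sqd{\mu_S - \mu_I}$. This is exactly the second-moment (variance) argument underlying the approximate Carath\'eodory bound of Barman \cite{Barman2015}.

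First I would expand the second moment using independence: since the vectors $y_j - \mu_I$ are independent and have mean zero, every cross term $\expec{\inner{y_j - \mu_I}{y_{j'} - \mu_I}}$ with $j \neq j'$ factors into a product of zero-mean expectations and vanishes. This gives
\[
\expec{\sqd{\mu_S - \mu_I}} = \frac{1}{m^2}\sum_{j=1}^m \expec{\sqd{y_j - \mu_I}} = \frac{1}{m}\,\expec{\sqd{y_1 - \mu_I}},
\]
so the problem reduces to bounding the variance of a single uniform draw in terms of the diameter $D$.

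For that single-draw variance I would use the fact that $\mu_I$ lies in the convex hull of $\{x_i \suchthat i \in I\}$: writing $\mu_I = \expec{y'}$ for an independent copy $y'$ of $y_1$ and applying Jensen's inequality, $\norm{y_1 - \mu_I} = \norm{\expec{y_1 - y'}} \leq \expec{\norm{y_1 - y'}} \leq D$, hence $\expec{\sqd{y_1 - \mu_I}} \leq D^2$. Substituting $m = 2/\eta^2$ yields $\expec{\sqd{\mu_S - \mu_I}} \leq \eta^2 D^2 / 2$. A slightly sharper route replaces this step by the identity $\expec{\sqd{y_1 - y'}} = 2\,\expec{\sqd{y_1 - \mu_I}}$ (the same vanishing-cross-term computation), giving the bound $D^2/2$ on the single-draw variance and a correspondingly better constant, but the cruder estimate already suffices.

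Finally, Markov's inequality gives $\prob{\sqd{\mu_S - \mu_I} > \eta^2 D^2} \leq 1/2$, which is the complement of the desired event, so $\prob{\norm{\mu_S - \mu_I} \leq \eta D} \geq 1/2$, as claimed. There is no genuine obstacle in this argument; the only place requiring a little care is the single-draw variance bound: since $\mu_I$ need not itself be one of the points $x_i$, the estimate $\norm{y_1 - \mu_I} \leq D$ is not immediate from the definition of the diameter and must be obtained through the convex-hull (Jensen) argument above.
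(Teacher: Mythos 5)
Your proof is correct and follows essentially the same route as the paper's: expand $\expec{\sqd{\mu_{S} - \mu_{I}}}$, kill the cross terms by independence and unbiasedness, bound the single-draw variance by $D^{2}$, and finish with Markov's inequality. The only difference is that you explicitly justify $\norm{y_{1} - \mu_{I}} \leq D$ via the Jensen/convex-hull argument, a step the paper's proof uses implicitly without comment.
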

\begin{proof}
Let $r_{1}, r_{2}, \dotsc, r_{\size{S}}$ be the i.i.d. uniform random points from $\{x_{i} \suchthat i \in I\}$.
\begin{align*}
\expec{\norm{\mu_{S} - \mu_{I}}^{2}} & = \frac{1}{\size{S}^{2}}~ \expec{\norm{\sum_{i=1}^{\size{S}} (r_{i} - \mu_{I})}^{2}} \\
& = \frac{1}{\size{S}^{2}}~ \sum_{i=1}^{\size{S}} \sum_{j=1}^{\size{S}} \expec{\inner{r_{i} - \mu_{I}}{r_{j} - \mu_{I}}} \\
& = \frac{1}{\size{S}^{2}}~ \sum_{i=1}^{\size{S}} \expec{\norm{r_{i} - \mu_{I}}^{2}} \qquad \text{as $r_{i}, r_{j}$ independent, and $\expec{r_{i}} = \mu_{I}$} \\
& \leq \frac{D^{2}}{\size{S}}, \qquad \text{where $D$ is the diameter of $\{x_{i} \suchthat i \in I\}$}.
\end{align*}
Therefore,
\[
\prob{\norm{\mu_{S} - \mu_{I}} > \sqrt{2} D/\sqrt{\size{S}}} = \prob{\norm{\mu_{S} - \mu_{I}}^{2} > 2 D^{2}/\size{S}} \leq 1/2.
\]
In other words, if $\size{S} = 2/\eta^{2}$ then $\prob{\norm{\mu_{S} - \mu_{I}} \leq \eta D} \geq 1/2$.
\end{proof}

\begin{theorem}
For any given points $x_{1}, x_{2}, \dotsc, x_{n} \in \R^{d}$, let $I \subseteq [n]$ be the set of optimal $(1-\alpha)n$ inliers and $V^{*}$ be the optimal $k$-dimensional affine subspace that minimizes their squared distance. Suppose $\sum_{i \in I} d(x_{i}, V^{*})^{2} \geq \delta~ \sum_{i=1}^{n} d(x_{i}, V^{*})^{2}$. Then, for any $0 < \eps < 1$, we can efficiently find, in time linear in $n$ and $d$, a $k$-dimensional linear subspace $V'$ such that
\[
\sum_{i \in I} d(x_{i}, V')^{2} \leq \sum_{i \in I} d(x_{i}, V^{*})^{2} + \eps~ \sum_{i \in I} \norm{x_{i}}^{2},
\]
with a constant probability.
\end{theorem}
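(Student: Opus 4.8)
The plan is to reduce the affine problem to the \emph{linear} subspace approximation with outliers of Theorem~\ref{thm:multi-sub} by translating the data so that the optimal affine flat becomes a linear subspace through the origin. By Proposition~\ref{prop:parallel} applied to the inliers, $V^{*}$ passes through $\mu_{I} = \sum_{i\in I} x_{i}/(1-\alpha)n$, so writing $V^{*} = \mu_{I} + U^{*}$ with $U^{*}$ a $k$-dimensional linear subspace and $z_{i} = x_{i} - \mu_{I}$, we get $\sum_{i\in I} d(x_{i}, V^{*})^{2} = \sum_{i\in I} d(z_{i}, U^{*})^{2}$. Moreover, since $V^{*}$ is optimal among all affine flats and in particular among those through $\mu_{I}$, the pair $(I, U^{*})$ is an optimal inlier set and optimal linear subspace for the translated points $\{z_{j}\}$ in the outlier sense (any linear solution $(J,W)$ for $\{z_{j}\}$ corresponds to the affine solution $(J, \mu_{I}+W)$ for $\{x_{j}\}$), and the hypothesis $\sum_{i\in I} d(x_{i},V^{*})^{2} \ge \delta \sum_{j} d(x_{j}, V^{*})^{2}$ translates verbatim into $\sum_{i\in I} d(z_{i}, U^{*})^{2} \ge \delta \sum_{j} d(z_{j}, U^{*})^{2}$. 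If we knew $\mu_{I}$, we could simply run Theorem~\ref{thm:multi-sub} on $\{z_{j}\}$ and return $\mu_{I}$ plus the resulting $k$-subspace.

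The difficulty is that $\mu_{I}$ is unknown because $I$ is unknown. To get around this I would estimate $\mu_{I}$ by the trick outlined before Lemma~\ref{lemma:sample-mean}: draw a uniform sample $T \subseteq [n]$ of size $O(1/(\eps^{2}(1-\alpha))) = O(1/(\eps^{2}\delta))$ (using $1-\alpha \ge \delta$), and enumerate all $2^{\size{T}}$ bipartitions of $T$, taking the empirical mean of one part as a candidate center $\hat\mu$ for each. For the partition that isolates $T\cap I$, the candidate is the mean of a uniform sample of inliers, so the sharper variance bound $\expec{\norm{\mu_{S}-\mu_{I}}^{2}} \le \sigma^{2}/\size{S}$ with $\sigma^{2} = \frac{1}{\size{I}}\sum_{i\in I}\norm{x_{i}-\mu_{I}}^{2}$, which the proof of Lemma~\ref{lemma:sample-mean} actually establishes, gives a candidate $\hat\mu$ with $(1-\alpha)n\,\norm{\hat\mu - \mu_{I}}^{2} \le \eps \sum_{i\in I}\norm{x_{i}}^{2}$; here I use $\sum_{i\in I}\norm{x_{i}-\mu_{I}}^{2} \le \sum_{i\in I}\norm{x_{i}}^{2}$ by optimality of the mean. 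Since $\size{T}$ is independent of $n$ and $d$, enumerating partitions costs only a constant factor.

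For each candidate I would translate to $\hat y_{j} = x_{j}-\hat\mu$, run Theorem~\ref{thm:multi-sub} on $\{\hat y_{j}\}$ with the fixed reference pair $(I, U^{*})$, and return the affine $k$-flat $V' = \hat\mu + V_{S}$. Because $\hat y_{j} = z_{j} + (\mu_{I}-\hat\mu)$ differs from $z_{j}$ by a fixed vector of length $\norm{\hat\mu-\mu_{I}}$, each squared distance $d(\hat y_{i},\cdot)^{2}$ differs from $d(z_{i},\cdot)^{2}$ by a $(1+\eps)$-multiplicative, $O(1/\eps)\norm{\hat\mu-\mu_{I}}^{2}$-additive amount via $(a+b)^{2}\le(1+\eps)a^{2}+(1+1/\eps)b^{2}$. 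Summing over $I$ and over all $j$, and converting $n\norm{\hat\mu-\mu_{I}}^{2}$ into $\tfrac{\eps}{\delta}\sum_{i\in I}\norm{x_{i}}^{2}$ via $1-\alpha\ge\delta$, shows the translated instance still satisfies an $\Omega(\delta)$-version of the hypothesis, so Theorem~\ref{thm:multi-sub} gives $\sum_{i\in I} d(\hat y_{i}, V_{S})^{2} \le (1+\eps)\sum_{i\in I} d(z_{i},U^{*})^{2} + (\text{perturbation})$. Folding the $(1+\eps)$ slack into the additive term using $\sum_{i\in I} d(x_{i}, V^{*})^{2} \le \sum_{i\in I}\norm{x_{i}}^{2}$ (the optimal affine error is below that of any linear $k$-flat, hence below $\sum_{i\in I}\norm{x_{i}}^{2}$) yields $\sum_{i\in I} d(x_{i}, V')^{2} \le \sum_{i\in I} d(x_{i}, V^{*})^{2} + \eps\sum_{i\in I}\norm{x_{i}}^{2}$ for the good candidate. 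I return the candidate of least robust error $\sum_{i\in N_{\alpha}(\cdot)} d(x_{i},\cdot)^{2}$; since $N_{\alpha}$ selects the cheapest $(1-\alpha)n$ points, this is no larger than the bound achieved by the good candidate.

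I expect the main obstacle to be the claim in the third paragraph that transports the multiplicative $(1+\eps)$ guarantee of Theorem~\ref{thm:multi-sub} --- which is calibrated to the \emph{optimal} inlier/subspace pair under an exact $\delta$-lower bound --- to the perturbed, translated instance where I feed in the fixed pair $(I,U^{*})$ with only an approximate center $\hat\mu$. I would handle this by observing that the additive building blocks (Lemma~\ref{lemma:bad-sub} and Theorems~\ref{thm:adaptive-sub} and~\ref{thm:adaptive-sub-T}) hold for an \emph{arbitrary} fixed pair $(I,U^{*})$, and that optimality and $\delta$ enter Theorem~\ref{thm:multi-sub} only to bound the all-points error by $\tfrac{1}{\delta}$ times the inlier error. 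This bound survives the translation up to the additive perturbation controlled above, which the $\eps^{T}$ shrinkage over the adaptive rounds then drives below $\eps\sum_{i\in I}\norm{x_{i}}^{2}$, leaving the running time linear in $n$ and $d$ up to the constant partition-enumeration factor.
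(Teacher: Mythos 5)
Your proposal is correct and follows essentially the same route as the paper's proof: a constant-size uniform sample whose bipartitions are enumerated by brute force to produce candidate centers, Lemma~\ref{lemma:sample-mean} (in the sharper variance form that its proof actually establishes) to certify that one candidate is close to $\mu_{I}$, translation of the data by that candidate, and the linear machinery of Theorem~\ref{thm:multi-sub} on the shifted instance. Two execution differences are worth recording. First, where you control the miscentering via $(a+b)^{2} \leq (1+\eps)a^{2} + (1+1/\eps)b^{2}$, which forces $\eta = O(\eps)$ and a sample of size $O\left(1/\eps^{2}(1-\alpha)\right)$, the paper instead applies Proposition~\ref{prop:parallel} to the inlier set: the parallel translate $\tilde{V}$ of $V^{*}$ through the candidate center $\mu_{S}$ satisfies exactly $\sum_{i \in I} d(x_{i}, \tilde{V})^{2} = \sum_{i \in I} d(x_{i}, V^{*})^{2} + (1-\alpha)n\, d(\tilde{V}, V^{*})^{2} \leq \sum_{i \in I} d(x_{i}, V^{*})^{2} + (1-\alpha)n \norm{\mu_{S} - \mu_{I}}^{2}$, so $\eta = O(\sqrt{\eps})$ suffices; this is cleaner but not essential. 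Second, and in your favor: the paper invokes Theorem~\ref{thm:multi-sub} as a black box on the shifted instance, which is slightly loose, since the shifted instance's optimal inlier set need not equal $I$ (so the black-box guarantee is stated relative to the wrong set of inliers) and the $\delta$-hypothesis is never verified for that instance; your explicit observation that the building blocks (Lemma~\ref{lemma:bad-sub}, Theorems~\ref{thm:adaptive-sub} and~\ref{thm:adaptive-sub-T}) hold for an \emph{arbitrary} fixed reference pair $(I, U^{*})$, with optimality and $\delta$ entering only through the bound on the initial residual, is exactly the reading needed to bound $\sum_{i \in I} d(x_{i}, V')^{2}$ as the theorem states, and your perturbation argument for that residual bound goes through. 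One caveat shared by both proofs: selecting among candidates by smallest robust error certifies the claimed bound only in the form $\sum_{i \in N_{\alpha}(V')} d(x_{i}, V')^{2} \leq \sum_{i \in I} d(x_{i}, V^{*})^{2} + \eps \sum_{i \in I} \norm{x_{i}}^{2}$, since the error over the unknown set $I$ cannot be tested; this is the natural algorithmic reading of the statement.
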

\begin{proof}
We pick a sample $T$ of $O\left(1/\eta^{2} (1-\alpha)\right)$ points uniformly at random from $x_{1}, x_{2}, \dotsc, x_{n}$, and then go over all partitions of this sample $T$ into two parts $(S, T \setminus S)$ by brute force. We consider the means $\mu_{S} = \sum_{i \in S} x_{i}/\size{S}$ for the part $S$, shift our entire data as $x_{1} - \mu_{S}, x_{2} - \mu_{S}, \dotsc, x_{n} - \mu_{S}$, and solve the subspace approximation problem for linear subspaces on this shifted input. By Lemma \ref{lemma:sample-mean}, we know that $\norm{\mu_{S} - \mu_{I}} \leq \eta D$, with a constant probability. The optimal affine subspace $V^{*}$ passes through the mean $\mu_{I}$ of the inliers. However, by Proposition \ref{prop:parallel}, these exists a parallel translate of $V^{*}$ through $\mu_{S}$, call it $\tilde{V}$, such that
\[
\sum_{i \in I} d(x_{i}, \tilde{V})^{2} \leq \sum_{i \in I} d(x_{i}, V^{*})^{2} + (1-\alpha) n~ \eta^{2} D^{2}.
\]
Thus, if we get a $(1+\eps)$-approximation to the shifted instance $x_{1} - \mu_{S}, x_{2} - \mu_{S}, \dotsc, x_{n} - \mu_{S}$ for linear subspace approximation using Theorem \ref{thm:multi-sub}, we essentially get an affine subspace approximation with squared error at most
\[
(1+\eps)~ \sum_{i \in I} d(x_{i}, V^{*})^{2} + (1+\eps)~ (1-\alpha) n~ \eta^{2} D^{2},
\]
with a constant probability. Looking carefully through the proof of Lemma \ref{lemma:sample-mean}, the guarantee is actually at most
\[
(1+\eps)~ \sum_{i \in I} d(x_{i}, V^{*})^{2} + (1+\eps)~ \eta^{2}~ \sum_{i \in I} \norm{x_{i} - \mu_{I}}^{2},
\]
which is at most $(1+\eps)~ \sum_{i \in I} d(x_{i}, V^{*})^{2} + \eps~ \sum_{i \in I} \norm{x_{i}}^{2}$ or an additive $2\eps~ \sum_{i \in I} \norm{x_{i}}^{2}$, for an appropriate choice of $\eta = O(\sqrt{\eps})$.
\end{proof}

\bibliographystyle{plain}
\bibliography{reference}

\end{document}